\newtheorem{theorem}{Theorem}
\newtheorem{lemma}[theorem]{Lemma}
\newtheorem{observation}[theorem]{Observation}
\newtheorem{claim}[theorem]{Claim}
\newcommand{\set}[1]{\left\{ #1 \right\}}
\newcommand{\dset}{{\mathcal{D}}}
\newcommand{\qset}{{\mathcal{Q}}}
\newcommand{\lset}{{\mathcal{L}}}
\newcommand{\cset}{{\mathcal{C}}}
\newcommand{\gset}{{\mathcal{G}}}
\newcommand{\rset}{{\mathcal{R}}}
\newcommand{\D}{{\dset}}
\newcommand{\wl}{W_{\mathsf{L}}}
\newcommand{\wri}{W_{\mathsf{R}}}
\newcommand{\dist}{\textnormal{\textsf{dist}}}
\newcommand{\cost}{\textnormal{\textsf{cost}}}
\newcommand\vG{G}
\newcounter{note}
\newenvironment{properties}[2][0]
{
	\begin{enumerate} \setcounter{enumi}{#1}}{\end{enumerate}}
\begin{document}
	

\begin{titlepage}
	
	\title{Paths and Intersections: Characterization of \\ Quasi-metrics in Directed Okamura-Seymour Instances}

	\author{Yu Chen\thanks{National University of Singapore, Singapore. Email: {\tt yu.chen@nus.edu.sg}. Work done while the author was a postdoc at EPFL.} \and Zihan Tan\thanks{Rutgers University, NJ, USA. Email: {\tt zihantan1993@gmail.com}.}} 

	\maketitle

	\thispagestyle{empty}
	\begin{abstract}
We study the following distance realization problem. Given a quasi-metric $D$ on a set $T$ of terminals, does there exist a directed Okamura-Seymour graph that realizes $D$ as the (directed) shortest-path distance metric on $T$? 
We show that, if we are further given the circular ordering of terminals lying on the boundary, then Monge property is a sufficient and necessary condition.
This generalizes previous results for undirected Okamura-Seymour instances. 

With the circular ordering, we give a greedy algorithm for constructing a directed Okamura-Seymour instance that realizes the input quasi-metric.
The algorithm takes the dual perspective concerning flows and routings, and is based on a new way of analyzing graph structures, by viewing graphs as \emph{paths and their intersections}. We believe this new understanding is of independent interest and will prove useful in other problems in graph theory and graph algorithms.

We also design an efficient algorithm for finding such a circular ordering that makes $D$ satisfy Monge property, if one exists. Combined with our result above, this gives an efficient algorithm for the distance realization problem.

	\end{abstract}
\end{titlepage}

\tableofcontents

\newpage

\section{Introduction}


We study the following distance realization problem: given a quasi-metric $D$ on a set $T$ of \emph{terminals}, does there exist a directed Okamura-Seymour graph that realizes $D$ as the (directed) shortest-path distance metric on $T$? Specifically, a function $D: T\times T\to \mathbb{R}^+$ is a \emph{quasi-metric}\footnote{Sometimes we allow $D$ to take value $+\infty$ to better reflect the property of directed shortest-path distance metrics, that is $D: T\times T\to \mathbb{R}^+\cup \set{+\infty}$; and the triangle inequality property needs to be naturally appended as follows: for all $t,t',t''\in T$, if $D(t,t'')=+\infty$, then either $D(t,t')$ or $D(t',t'')$ is $D(t,t'')=+\infty$.}, iff
\begin{itemize}
\item for all $t,t'\in T$, $D(t,t')\ge 0$, and $D(t,t')= 0$ iff $t=t'$; and
\item for all $t,t',t''\in T$, $D(t,t'')\le D(t,t')+D(t',t'')$.
\end{itemize}
Quasi-metrics do not have the symmetry axiom (i.e., it is not required that $D(t,t')=D(t',t)$ holds).

A graph $G$ is called a \emph{directed Okamura-Seymour graph} with respect to $T$, iff $G$ is a planar graph with $T\subseteq V(G)$, and all terminals in $T$ lie on the boundary of the outer face in its plane embedding. 
We say that \emph{$G$ realizes $D$} as the directed shortest-path distance metric on $T$, iff for every ordered terminal pair $(t,t')$, the minimum length of any directed path in $G$ going from $t$ to $t'$, denoted by $\dist_{\vG}(t,t')$, equals $D(t,t')$.

The problem is interesting from three aspects. First of all, the problem itself is a typical \emph{distance realization} problem. In a problem of this type, we are given a family $\gset$ of graphs and a metric $D$ on a set $T$ of vertices designated as \emph{terminals}. The goal is to decide whether or not there exists a graph $G\in \gset$ and a way of identifying some of its vertices as terminals in $T$, such that for every pair $t,t'\in T$ of terminals, $\dist_{G}(t,t')=D(t,t')$.
Apart from being a fundamental problem in metric graph theory, distance realization problems have also found numerous applications in computational biology (e.g., constructing phylogenetic trees) \cite{dress2012basic}, chemistry \cite{janezic2015graph}, hierarchical classification problems \cite{gordon1987review}, and network tomography \cite{chung2001distance}.

The most extensively studied cases are when $\gset$ is the family of 
(i) ultrametrics (related to hierarchical clasifications) \cite{gordon1987review};
(ii) trees \cite{hakimi1965distance,pereira1969note,buneman1974note,dress1984trees}, where a four-vertex condition is proved to be sufficient and necessary; (iii) cactus graphs \cite{hayamizu2020recognizing}; and (iv) undirected Okamura-Seymour instances \cite{ChangO20,hurkens1988tidy}, all for undirected graph families. 
In this paper, we study the problem for the first directed graph family: the family of directed Okamura-Seymour instances.

Second, a sufficient and necessary condition for the distance realization problem on a graph family $\gset$ usually has a concrete connection with the multi-commodity flow multi-cut gap of $\gset$, a quantity related to the approximation ratios for many classic flow/cut based graph problems. 
For example, for undirected graphs, the flow-cut gap for Okamura-Seymour instances is $1$ \cite{okamura1981multicommodity}, and the Monge property is shown to be sufficient and necessary for the corresponding distance realization problem \cite{ChangO20,hurkens1988tidy}. The two proofs are in fact equivalent to each other, as illustrated in Section 74.1e of \cite{schrijver2003combinatorial}.
On the other hand, the flow-cut gap for directed graphs remain quite elusive. It is shown to be $\Omega(n^{1/7})$ for $n$-vertex general graphs \cite{chuzhoy2009polynomial}, which is in sharp contrast with the $\Theta(\log n)$ gap for undirected graphs \cite{leighton1999multicommodity,linial1995geometry}. Recently, Kawarabayashi and Sidiropoulos showed that the flow-cut gap for directed planar graphs is $O(\log^3 n)$, and Sidiropoulos mentioned\footnote{at the Advances on Metric Embedding Workshop at FOCS 22, video recording can be accessed \url{https://vimeo.com/user39621409/review/771333384/dfe1ed17aa}} that determining the flow-cut gap for directed Okamura-Seymour instances remains an interesting open problem.

Third, this problem also fits into a recent line of works studying directed shortest-path structures and distance metrics \cite{bodwin2019structure,akmal2022local,cizma2022geodesic,cizma2023irreducible}, with the motivation being that, while researchers have developed powerful tools for exploring the shortest-path distance metrics of undirected graphs, our knowledge for directed graphs remains quite limited. In this sense, our work can be viewed as a basic step towards a better understanding of the shortest-path distance metrics of directed graphs, and we hope that our work could inspire more research in this direction.

\subsection{Our Results}

Our main result, summarized in \Cref{thm: directed 4 point condition}, says that, if we are further given the circular ordering of the terminals lying on the boundary, then Monge property is a sufficient and necessary condition for the input quasi-metric to be realizable by a directed Okamura-Seymour instance. 

\begin{theorem}
\label{thm: directed 4 point condition}
A quasi-metric $D$ on $T$ is realizable by a directed Okamura-Seymour instance where the terminals in $T$ lie on the boundary in the circular ordering $\sigma$, iff for all quadruples $t_1,t_2,t_3,t_4$ of terminals in $T$ appearing in $\sigma$ in this order, $D(t_1,t_3)+D(t_2,t_4)\ge D(t_1,t_4)+D(t_2,t_3)$. Moreover, such a quasi-metric can be realized by a directed Okamura-Seymour instance on $O(|T|^6)$ vertices.
\end{theorem}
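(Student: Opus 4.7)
The statement has two directions. The ``only if'' direction (necessity of Monge) admits a short topological crossing argument, and I would dispatch it first. The ``if'' direction (sufficiency, together with the $O(|T|^{6})$ size bound) is the substantive content of the theorem and requires an explicit construction.

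For necessity, assume $G$ is a directed Okamura-Seymour instance realizing $D$, with terminals on the outer face in cyclic order $\sigma$. Fix a quadruple $t_1, t_2, t_3, t_4$ appearing in $\sigma$ in this order, and let $P_{13}$, $P_{24}$ be shortest directed paths witnessing $D(t_1, t_3)$ and $D(t_2, t_4)$. Since the four endpoints alternate around the outer face, $P_{13}$ and $P_{24}$ cross topologically, and planarity of $G$ forces them to share some vertex $v$. Write the length of $P_{13}$ as $a+b$, where $a$ is the $t_1{\to}v$ prefix and $b$ the $v{\to}t_3$ suffix, and analogously split $P_{24}$ as $c+d$. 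Splicing yields directed walks witnessing $D(t_1, t_4) \leq a + d$ and $D(t_2, t_3) \leq c + b$; adding gives $D(t_1, t_4) + D(t_2, t_3) \leq (a+b) + (c+d) = D(t_1, t_3) + D(t_2, t_4)$, which is exactly Monge. Directions work out because both spliced walks enter $v$ along an incoming arc and leave along an outgoing arc of the original directed paths.

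For sufficiency I would construct the instance pair by pair. For every ordered pair $(t_i, t_j)\in T\times T$ with $i\neq j$, introduce an abstract weighted directed path $P_{ij}$ from $t_i$ to $t_j$ of length $D(t_i, t_j)$, playing the role of the eventual shortest-path skeleton. Then, for every pair of pairs $\{(t_i, t_k), (t_j, t_l)\}$ whose endpoints interleave in $\sigma$ as $(t_i, t_j, t_k, t_l)$, identify an interior vertex of $P_{ik}$ at distance $a$ from $t_i$ with an interior vertex of $P_{jl}$ at distance $c$ from $t_j$, where $a, c$ are chosen so that the two spliced walks have lengths exactly $D(t_i, t_l)$ and $D(t_j, t_k)$. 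The Monge inequality for this quadruple is precisely what guarantees that feasible $a, c$ exist. Finally, embed the resulting graph so that the terminals lie on the outer face in order $\sigma$ and each $P_{ij}$ is drawn as a simple arc, and verify that $\dist_G(t_i, t_j) = D(t_i, t_j)$ for every pair.

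The main obstacle is consistency across many pairs at once. For each path $P_{ij}$, the intersection positions with all other crossing paths must appear along $P_{ij}$ in an order compatible with the cyclic structure of $\sigma$, and globally no shortcut can arise that makes some $\dist_G(t_i, t_j)$ strictly less than $D(t_i, t_j)$. Monge provides the slack needed locally for each individual splice; the entire family of Monge inequalities is what I expect to enforce both the compatible ordering of intersection positions along each skeleton path and the absence of shortcuts. The latter I would try to establish by induction on the number of splice vertices traversed: any directed $t_i{\to}t_j$ walk in $G$ decomposes into segments of skeleton paths alternating at splice vertices, and each splice can be charged to an application of Monge or the triangle inequality for $D$, eventually lower-bounding the walk length by $D(t_i, t_j)$. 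The vertex count $O(|T|^{6})$ would then arise from the $O(|T|^{2})$ skeleton paths, each carrying $O(|T|^{4})$ intersection vertices (roughly one per pair of crossing partners, plus subdivisions needed to realize the embedding), matching the stated bound.
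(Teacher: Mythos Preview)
Your necessity argument is correct and matches the standard crossing/splicing proof.

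The sufficiency proposal, however, has a genuine gap at exactly the point you flag as ``the main obstacle.'' First a local issue: you ask for splice positions $a,c$ making \emph{both} spliced walks hit $D(t_i,t_l)$ and $D(t_j,t_k)$ exactly, but adding those two equations forces $D(t_i,t_k)+D(t_j,t_l)=D(t_i,t_l)+D(t_j,t_k)$, i.e.\ Monge with equality, which is not assumed. More fundamentally, your scheme fixes in advance the combinatorics of which skeleton paths cross and where each path threads relative to the intersections of the others; the edge weights then adapt to $D$. The paper shows explicitly (the six-terminal example in the technical overview) that no such fixed-topology construction can work in the directed case: whether $\pi_{a,b}$ must pass to the left or to the right of the intersection of $\pi_{t_1,t_2}$ and $\pi_{t_3,t_4}$ depends on $D$, because the wrong choice creates a three-path rerouting that violates a strict inequality among six $D$-values. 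So the drawing itself, not just the weights, must be chosen as a function of $D$, and deciding how to draw each new path is the entire content of the proof.

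The paper's actual approach is quite different from yours. It inserts the paths $\pi_{t,t'}$ one at a time; after each insertion, feasibility of edge weights is an LP, and Farkas' lemma says infeasibility is witnessed by a pair of integral terminal flows $(F,F')$ with $F$ dominating $F'$ but $\cost(F)<\cost(F')$ --- a ``restricting pair.'' The technical heart is showing that when inserting $\pi_{a,b}$ there is always a drawing avoiding every restricting pair. This is done by an induction on partial ``walls'' (corridors for the curve), reducing any obstruction to a combination of ``good pairs'' whose associated inequalities are then refuted by chaining several Monge inequalities together --- not a single splice charge, but a telescoping sum over a whole good sequence. Your proposed induction on splice vertices, charging each to one Monge or triangle inequality, does not capture this: the obstructions that arise genuinely require combining many Monge inequalities across different quadruples simultaneously. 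Finally, the $O(|T|^6)$ bound is obtained by a separate post-processing step that uncrosses same-direction double intersections and collapses $\Theta$-free bigons, after which any two of the $|T|^2$ paths meet in at most $O(|T|^2)$ points.
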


Our \Cref{thm: directed 4 point condition} generalizes the previous result on undirected Okamura-Seymour instances \cite{ChangO20,hurkens1988tidy} to the directed case. The proof requires fundamentally new ideas, and is necessarily much more complicated than the undirected case. We provide in \Cref{subsec: overview} an overview of our approach, discussing the barrier of the previous approaches in the directed case and our way of overcoming it.
Our approach proves useful a new understanding of graphs: instead of viewing a graph as consisting of vertices and edges, we view it as being formed by \emph{paths and their intersections}. We believe this new understanding is of independent interest, and will facilitate other research in graph algorithms and graph theory.

\textbf{Remark.} \Cref{thm: directed 4 point condition} also holds if we allow the quasi-metric $D$ to take value $+\infty$. The four-term inequality needs to be naturally augmented by the following condition: for all quadruples $t_1,t_2,t_3,t_4$ of terminals in $T$ appearing in $\sigma$ in this order, if one of $D(t_1,t_4),D(t_2,t_3)$ is $+\infty$, then at least one of $D(t_1,t_3),D(t_2,t_4)$ is $+\infty$.

With the sufficient and necessary conditions in \Cref{thm: directed 4 point condition}, we show that we can efficiently decide whether or not there exists a directed Okamura-Seymour instance $G$ realizing the input quasi-metric $D$, by efficiently finding a good circular ordering $\sigma$ or certifying that none exists.

\begin{theorem}
\label{cor: algorithm}
There is an efficient algorithm, that, given a quasi-metric $D$, correctly decides if there exists a directed Okamura-Seymour instance $G$ that realizes $D$.
\end{theorem}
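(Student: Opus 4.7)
By \Cref{thm: directed 4 point condition}, $D$ is realizable by a directed Okamura-Seymour instance if and only if there exists a circular ordering $\sigma$ of $T$ under which $D$ satisfies the four-point inequality $D(t_1,t_3)+D(t_2,t_4)\ge D(t_1,t_4)+D(t_2,t_3)$ on every quadruple in cyclic order $(t_1,t_2,t_3,t_4)$. Deciding realizability therefore reduces to finding such a $\sigma$, or certifying that none exists.

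The plan is to reduce the directed search to an undirected one by symmetrization. Define $\hat{D}(t,t')=D(t,t')+D(t',t)$. If a directed Okamura-Seymour graph $G$ realizes $D$ with boundary ordering $\sigma$, then the mirror-image planar embedding of $G$ realizes the same $D$ with boundary ordering $\sigma^{-1}$, so the directed Monge inequality holds simultaneously for $\sigma$ and for $\sigma^{-1}$. Summing the two inequalities term by term yields the symmetric inequality $\hat{D}(t_1,t_3)+\hat{D}(t_2,t_4)\ge \hat{D}(t_1,t_4)+\hat{D}(t_2,t_3)$ on $\hat{D}$ with respect to $\sigma$. Hence every ordering valid for $D$ is also valid for $\hat{D}$ in the sense of the undirected Okamura-Seymour distance realization problem. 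I would then invoke the polynomial-time algorithm of \cite{ChangO20,hurkens1988tidy} on $\hat{D}$, which either produces a compact structure (such as a PC-tree whose leaves are the terminals) encoding all orderings making $\hat{D}$ satisfy the undirected Monge inequality, or reports that none exists --- in which case we immediately output ``not realizable''.

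The final step is to search this compact structure for an ordering that additionally satisfies the directed Monge inequality on $D$. The main obstacle is that the family of encoded orderings can be exponentially large, so a na\"ive enumeration is prohibitively slow. To obtain a polynomial-time algorithm, I would aim to prove a localization lemma: a violation of directed Monge at a quadruple $(t_1,t_2,t_3,t_4)$ corresponds to an obstruction at a single PC-tree node, namely the unique node whose child-subtrees separate the four terminals, so the cyclic order of each internal node's children can be chosen independently using only polynomially many four-point checks on $D$. Establishing this factorization --- presumably via a structural exchange argument showing that local modifications at one PC-tree node do not damage the directed Monge condition at distant quadruples --- is the crux of the proof. Once it is in hand, each node's local sub-problem is polynomial in size and solvable in polynomial time, and assembling the local choices yields either the desired $\sigma$ or a certificate that no valid ordering exists; combining this with \Cref{thm: directed 4 point condition} gives the decision algorithm of \Cref{cor: algorithm}.
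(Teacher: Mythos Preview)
Your symmetrization step is correct and elegant: the directed Monge inequalities for $\sigma$ and for $\sigma^{-1}$ coincide (just relabel the quadruple cyclically), so summing them indeed yields the undirected Monge condition on $\hat D$, and hence every valid ordering for $D$ is valid for $\hat D$. Reducing the search space via the undirected PQ/PC-tree for $\hat D$ is a natural idea.

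However, the proof is genuinely incomplete at the step you yourself flag as ``the crux''. The localization lemma is only stated as a hope, not established, and there are two concrete obstacles. First, factorization is not automatic: a quadruple with two terminals in one child subtree $A$ and one each in children $B,C$ of a P-node has its cyclic order determined \emph{jointly} by the permutation chosen at the P-node and by the orientation chosen inside $A$, so the directed Monge constraint on that quadruple couples decisions at different tree nodes. You would need a nontrivial structural argument (e.g.\ that all such ``mixed'' constraints become equalities on $\hat D$ and hence impose no directed constraint, or something similar) to decouple them. Second, even granting locality, you assert that ``each node's local sub-problem is polynomial in size and solvable in polynomial time'', but at a P-node with $m$ children the task is to find a cyclic ordering of $m$ objects satisfying $\Theta(m^4)$ quadruple constraints; this is not obviously tractable and you give no algorithm for it. Without both pieces, the proposal is a plan rather than a proof.

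For comparison, the paper does not symmetrize at all. It works directly with the \emph{directed} Monge inequalities: fix a terminal $a$, guess its clockwise neighbour $b$, and maintain an ordered partition $\Sigma$ of the remaining terminals that is repeatedly refined by four explicit tests (each a directed four-point inequality involving two singleton groups $\{c\},\{d\}$ and two terminals $t_1,t_2$ in a common group). Whenever a test fires it forces $t_1\prec t_2$ and splits the group; when no test fires, the authors show that the residual ambiguity inside each group is independent of everything outside it, so one recurses on each group together with $a$. The correctness analysis is a short case analysis (five cases according to how a quadruple is distributed among groups), and the running time is $O(|T|^6)$. This sidesteps both of your obstacles by never needing a PC-tree or a local-ordering subroutine.
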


\paragraph{Related Work.} 
There are other works on distance realization problems besides those mentioned in the introduction, e.g.,
\cite{patrinos1972distance,simoes1982submatrices,simoes1987graph,bandelt1990recognition,varone1998trees,bar2022graph,bar2023composed}. See the survey by Aouchiche and Hansen \cite{aouchiche2014distance}.

The flow-cut gap in directed Okamura Seymour instances is proved to be $1$ in some special cases, for example for the demands when the sources and the sinks appear on the boundary in a non-crossing manner \cite{diaz1972multicommodity}, and when the graph is acyclic \cite{nagamochi1989max}.
There are also other interesting works on compressing Okamura-Seymour metrics \cite{goranci2017improved,chang2018near,li2019planar,mozes2022improved}.

\subsection{Technical Overview}
\label{subsec: overview}

We provide a high-level overview on our approaches in the proof of \Cref{thm: directed 4 point condition}.

\subsubsection*{Comparison between undirected and directed cases}

Previous results \cite{ChangO20,hurkens1988tidy} showed that, given a metric $D$ and a circular ordering $\sigma$, Monge property with respect to $\sigma$ is a sufficient and necessary condition for the metric $D$ to be realizable by an undirected Okamura-Seymour instance, and more surprisingly, a universal Okamura-Seymour instance. Specifically, there exist
\begin{itemize}
\item a planar graph $G^*$;
\item a designated face $F$ in its plane embedding; and
\item for every pair $u,u'$ of vertices on $F$, a designated path $P_{u,u'}$ in $G$;
\end{itemize} 
such that, for every $\sigma$ and metric $D$ satisfying Monge property with respect to $\sigma$, there are
\begin{itemize}
\item a way of identifying terminals in $T$ with vertices on $F$, respecting the ordering $\sigma$; and
\item a way of choosing edge weights $\set{w(e)}_{e\in E(G^*)}$, such that for every (unordered) pair $t,t'\in T$, $P_{t,t'}$ is the shortest $t$-$t'$ path in $G^*$, and its length is $D(t,t')$. 
\end{itemize}
In order words, the skeleton of the graph $G^*$ and its shortest path structure are completely fixed, and upon receiving a metric $D$, we just need to adjust its edge weights to realize $D$ via this structure.

Previous work give different constructions of $G^*$. Section 74.1e of \cite{schrijver2003combinatorial} presents a nest-type construction, by drawing a straight line between every pair of boundary vertices and then replace each intersection with a vertex. \cite{ChangO20} gave a more compact structure, using a quarter-grid construction.

The directed case, however, does not admits such a universal construction, for the following reasons.

Denote $T=\set{a,b,t_1,t_2,t_3,t_4}$,
let $\sigma=(a,t_4,t_1,b,t_3,t_2)$ be
a circular ordering on $T$ (represented in the clockwise way), and let $D$ be a quasi-metric on $T$, 
where 
\begin{itemize}
\item for every pair $t,t'\in T$ appearing consecutively in $\sigma$, $D(t,t')=4$, $D(t',t)=1$;
\item for every tuple $t,t'',t'\in T$ appearing consecutively in $\sigma$, $D(t,t')=4$, $D(t',t)=2$;
\item for every antipodal pair $(t,t')\in \set{(a,b),(t_1,t_2),(t_3,t_4)}$, $D(t,t')=D(t',t)=3$;
\end{itemize}
It is easy to verify that the quasi-metric $D$ satisfies Monge property with respect to $\sigma$.

Imagine that we are constructing such a universal graph $G^*$ for the directed case. We have inserted a directed path $\pi_{t_1,t_2}$ from $t_1$ to $t_2$ designated to be the shortest $t_1$-$t_2$ path, and similarly a directed path $\pi_{t_3,t_4}$. According to $\sigma$, these two paths have to create an intersection, which we denote by $p$. Now we want to insert the $a$-$b$ shortest path $\pi_{a,b}$, and we are faced with the following question:

\emph{Should the path $\pi_{a,b}$ pass around $p$ from its left or from its right?}

In fact, if we blindly let path $\pi_{a,b}$ pass around $p$ from its left, as shown in \Cref{fig: routing2}, then we are in trouble. This is since, on the one hand, from our definition of the quasi-metric $D$,
\[D(a,b)+D(t_1,t_2)+D(t_3,t_4)=3+3+3<4+4+4=D(a,t_4)+D(t_1,b)+D(t_3,t_2);\]
on the other hand, since path $\pi_{a,b}$ inevitably causes intersections with paths $\pi_{t_1,t_2},\pi_{t_3,t_4}$, as shown in \Cref{fig: routing3}, we can extract a $t_1$-$b$ path, an $a$-$t_4$ path, and a $t_3$-$t_2$ path, by non-repetitively using the edges from the shortest paths $\pi_{t_1,t_2},\pi_{t_3,t_4}$, and $\pi_{a,b}$, which implies that 
\[D(a,b)+D(t_1,t_2)+D(t_3,t_4)\ge D(a,t_4)+D(t_1,b)+D(t_3,t_2),\]
a contradiction. 

\begin{figure}[h!]
	\centering
	\subfigure[Paths $\pi_{t_1,t_2}$ and $\pi_{t_3,t_4}$.]
	{\scalebox{0.08}{\includegraphics{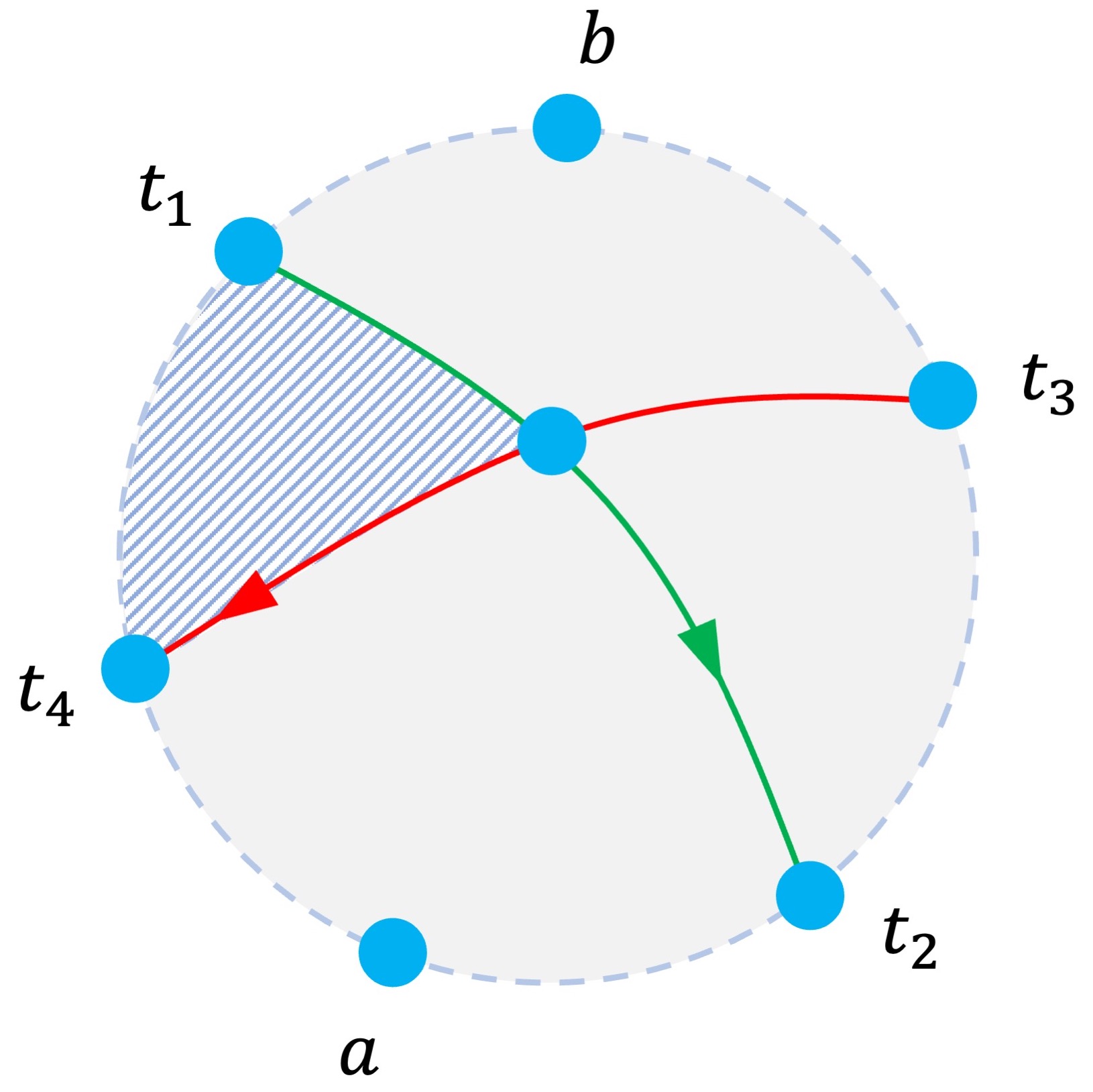}}\label{fig: routing1}}
	\hspace{0.5cm}
	\subfigure[A way of inserting $\pi_{a,b}$.]
	{
		\scalebox{0.08}{\includegraphics{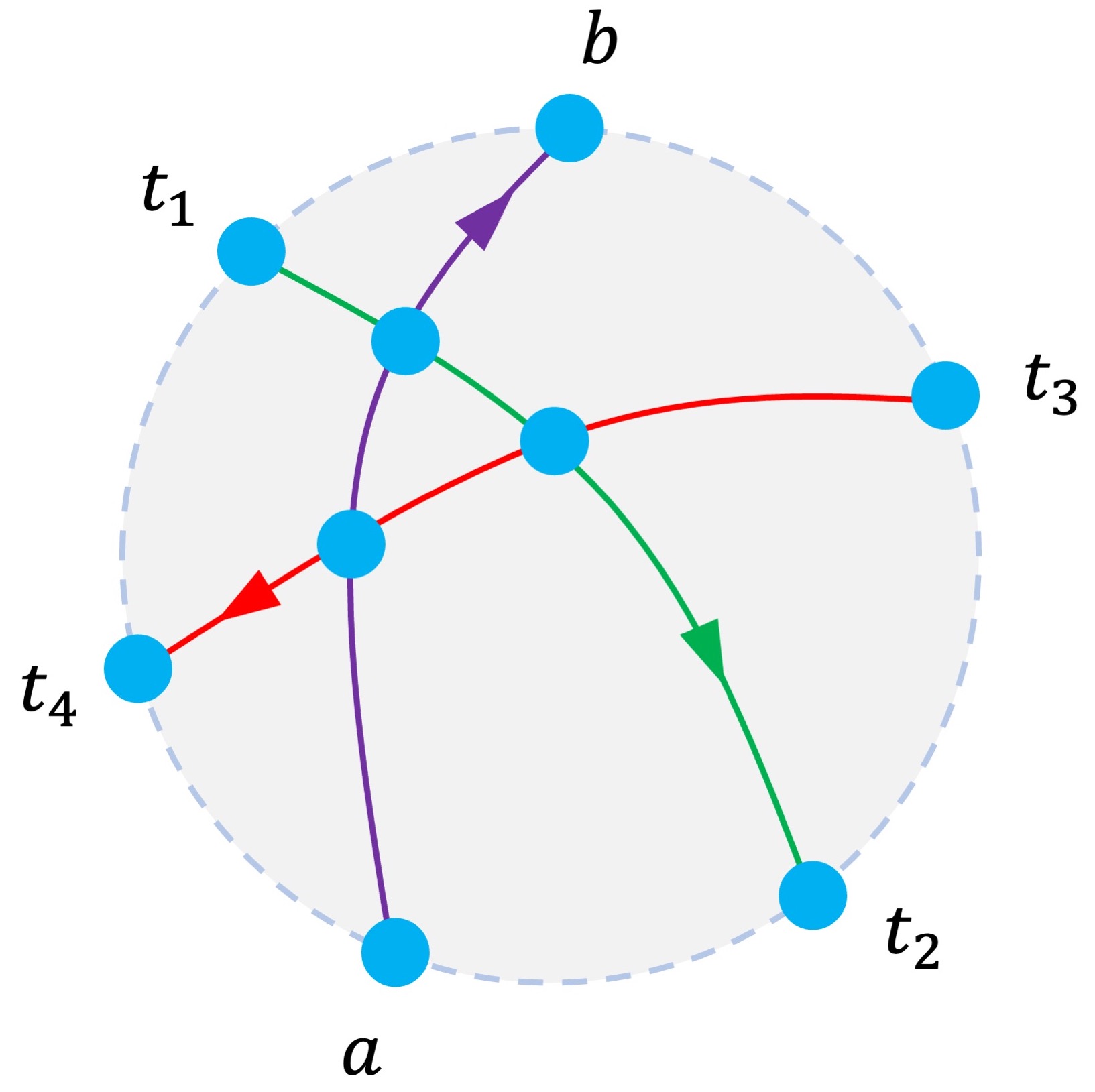}}\label{fig: routing2}}
	\hspace{0.5cm}
	\subfigure[Extracting a $t_1$-$b$ path, an $a$-$t_4$ path, and a $t_3$-$t_2$ path.]
	{
		\scalebox{0.08}{\includegraphics{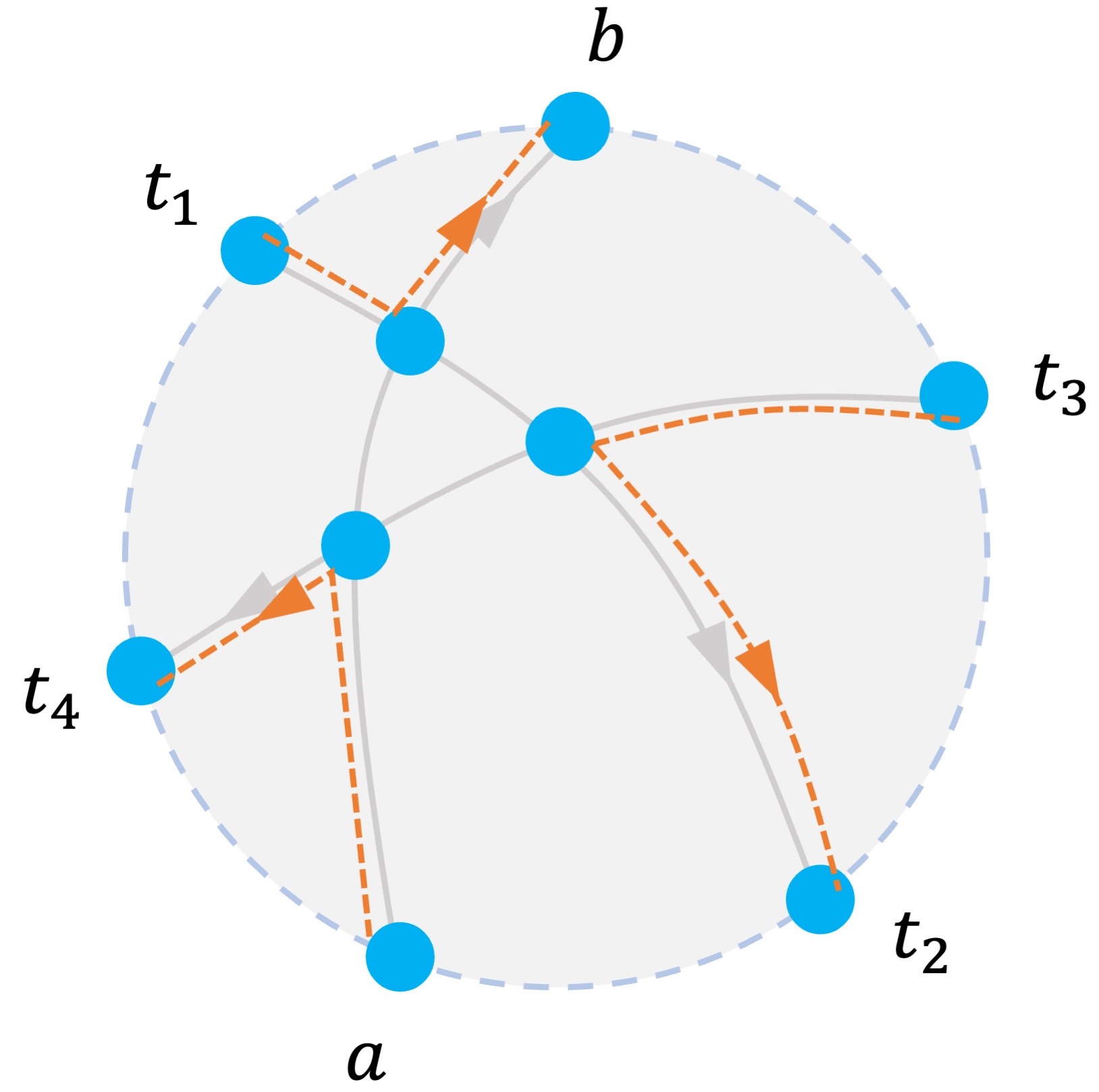}}\label{fig: routing3}}
	\caption{An illustration of an incorrect way of inserting path $\pi_{a,b}$ under certain $(D,\sigma)$.\label{fig: routing}}
\end{figure}

Therefore, under the current quasi-metric $D$, we must let $\pi_{a,b}$ pass around $p$ from its right.
However, if we mirror flip the quasi-metric by simply changing $\sigma$ to be its mirror flip $(t_2,t_3,b,t_1,t_4,a)$, then we have to let $\pi_{a,b}$ pass around $p$ from its left, by the same arguments. This means that the relative position between $\pi_{a,b}$ and $p$, which is essentially what constitutes the shortest path structure, has to depend on $D$ and $\sigma$, and therefore cannot be universal in the directed setting.

\subsubsection*{Forbidden areas, and getting around them}

Continuing with the example above, we can in fact show that the path $\pi_{a,b}$ is not allowed to touch the blue area shown in \Cref{fig: routing1}. This area can be understood as a ``forbidden area'' caused by the pair $\set{D(a,b), D(t_1,t_2), D(t_3,t_4)},  \set{D(a,t_4), D(t_1,b), D(t_3,t_2)}$ of sets of values in $D$. We call such a pair of sets a \emph{restricting pair}.

In order to construct a directed Okamura-Seymour instance realizing $D$, we use a greedy algorithm that inserts terminal shortest paths one by one.
It suffices to show that, upon receiving every pair $(a,b)$ of terminals, there exists a way of inserting the directed path $\pi_{a,b}$ into the current graph that gets around all forbidden areas. Equivalently, we need to show that the union of all forbidden areas does not block the way from $a$ to $b$ inside the disc.

Note that the disc boundary is separated by $a$ and $b$ into two segments: the right segment (from $b$ clockwise to $a$) and the left segment (from $a$ clockwise to $b$).
Typically, a forbidden area rises from either the left segment or the right segment and extends to somewhere in the middle. Therefore, in a ``geometry duality'' sense,
\emph{the forbidden areas block the way from $a$ to $b$ iff there exist a forbidden area from the left segment and a forbidden area from the right segment that intersect each other}.

But how can we ensure that there are no such intersecting forbidden areas? As an example, how do we rule out the possibility shown in \Cref{fig: forbidden}, where a forbidden area $p$-$t_1$-$t_4$ from the left segment intersects the forbidden area $q$-$t_6$-$t_7$ from the right segment?

\begin{figure}[h!]
	\centering
	\includegraphics[scale=0.12]{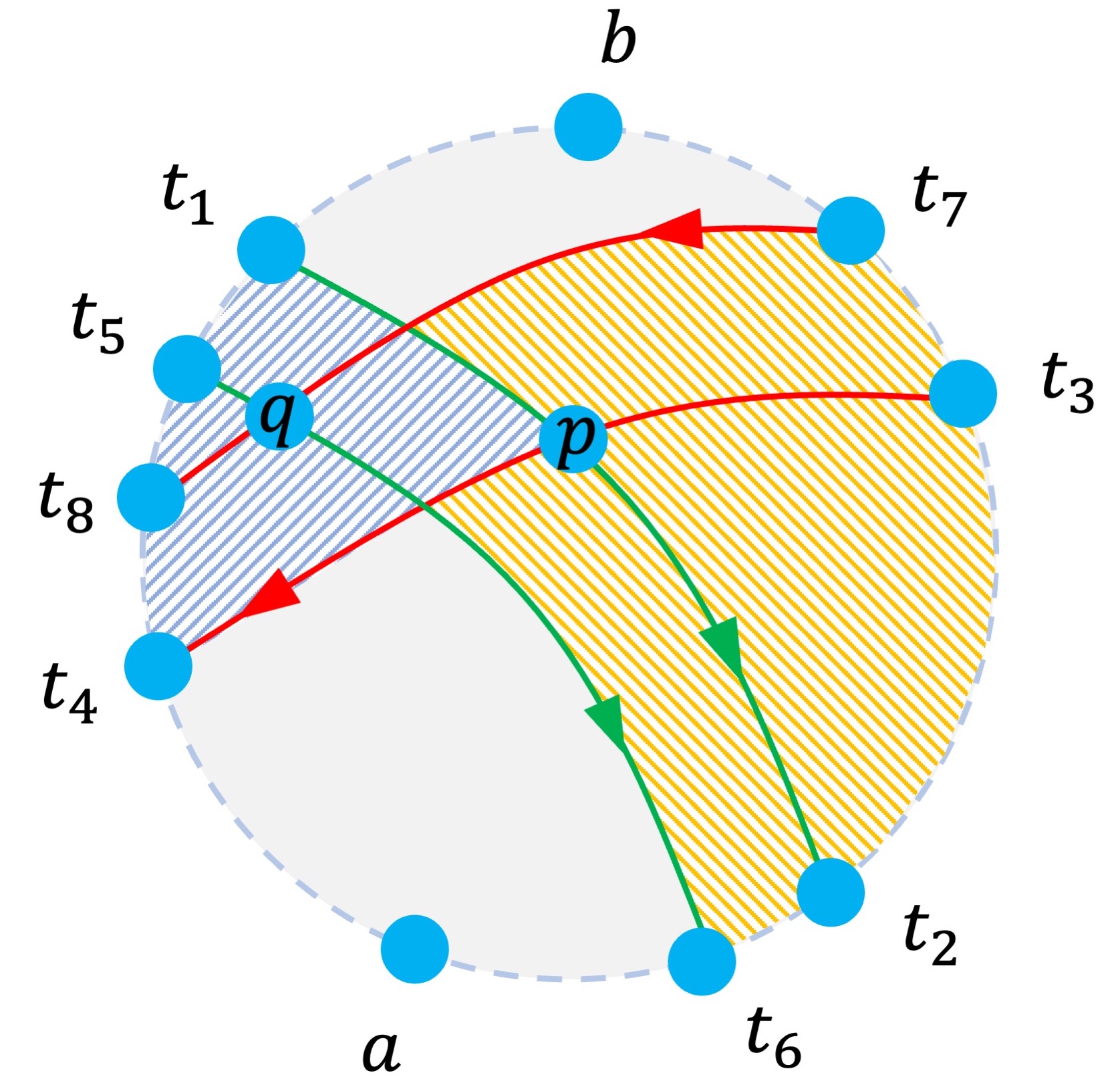}
	\caption{An illustration of intersecting forbidden areas (yellow is partially covered by blue).\label{fig: forbidden}}
\end{figure}

The answer is to use the Monge property of the input quasi-metric $D$ (this is the first and the only time we use it). Specifically, the forbidden area $p$-$t_1$-$t_4$ is caused by the restricting pair 
\begin{itemize}
	\item $\cset_1=\set{(a,b),(t_1,t_2),(t_3,t_4)}$; and
	\item $\cset'_1=\set{(a,t_4),(t_1,b),(t_3,t_2)}$;
\end{itemize}
and the forbidden area  $q$-$t_6$-$t_7$ is caused by the restricting pair 
\begin{itemize}
	\item $\cset_2=\set{(a,b),(t_5,t_6),(t_7,t_8)}$; and
	\item $\cset'_2=\set{(a,t_6),(t_5,t_8),(t_7,t_b)}$.
\end{itemize}
Denote $D(\cset_1)=D(a,b)+D(t_1,t_2)+D(t_3,t_4)$, and define $D(\cset'_1),D(\cset_2),D(\cset'_2)$ similarly. For $(\cset_1,\cset'_1)$ and $(\cset_2,\cset'_2)$ to be restricting pairs, $D(\cset_1)<D(\cset'_1)$ and $D(\cset_2)<D(\cset'_2)$ must hold, and so $D(\cset_1)+D(\cset_2)<D(\cset'_1)+D(\cset'_2)$, i.e.,
\begin{equation}
\label{eqn_1}
D\bigg(2(a,b),(t_1,t_2),(t_3,t_4),(t_5,t_6),(t_7,t_8)\bigg)<D\bigg((a,t_4),(t_1,b),(t_3,t_2),(a,t_6),(t_5,t_8),(t_7,b)\bigg).
\end{equation}
However, this is impossible, since from the Monge property of $D$,
\begin{itemize}
	\item $D(t_1,t_2)+D(t_7,t_8)\ge {\color{red}D(t_1,t_8)}+{\color{blue}D(t_7,t_2)}$, as $(t_1,t_2)$ and $(t_7,t_8)$ cross; and
	\item $D(a,b)+D(t_3,t_4)\ge D(a,t_4)+{\color{brown}D(t_3,b)}$, as $(a,b)$ and $(t_3,t_4)$ cross; and
	\item ${\color{brown}D(t_3,b)}+{\color{blue}D(t_7,t_2)}\ge D(t_3,t_2)+D(t_7,b)$, as $(t_3,b)$ and $(t_7,t_2)$ cross; and
	\item $D(a,b)+D(t_5,t_6)\ge D(a,t_6)+{\color{brown}D(t_5,b)}$, as $(a,b)$ and $(t_5,t_6)$ cross; and
	\item ${\color{brown}D(t_5,b)}+{\color{red}D(t_1,t_8)}\ge D(t_1,b)+D(t_5,t_8)$, as $(t_5,b)$ and $(t_1,t_8)$ cross.
\end{itemize}
Combining all these inequalities, we get that 
\[
D\bigg(2(a,b),(t_1,t_2),(t_3,t_4),(t_5,t_6),(t_7,t_8)\bigg)\ge D\bigg((a,t_4),(t_1,b),(t_3,t_2),(a,t_6),(t_5,t_8),(t_7,t_b)\bigg).
\]
a contradiction to Inequality~\ref{eqn_1}.

To summarize, we refute the possibility of intersecting forbidden areas in \Cref{fig: forbidden} by showing that their corresponding pairs $(\cset_1,\cset'_1),(\cset_2,\cset'_2)$ cannot both be restricting pairs. In other words, although there are Monge quasi-metrics $D$ (quasi-metrics $D$ satisfying Monge property) allowing $(\cset_1,\cset'_1)$ to be a restricting pair, and there are Monge quasi-metrics $D$ allowing $(\cset_2,\cset'_2)$ to be a restricting pair, there are no Monge quasi-metrics $D$ allowing both $(\cset_1,\cset'_1)$ and $(\cset_2,\cset'_2)$ to be restricting pairs.

The core in the analysis above is the very first inequality $D(t_1,t_2)+D(t_7,t_8)\ge {\color{red}D(t_1,t_8)}+{\color{blue}D(t_7,t_2)}$, which is the bridge between the collections $\cset_1,\cset_2$: since $(t_1,t_2)$ as a terminal pair in $\cset_1$ and $(t_7,t_8)$ as a terminal pair in $\cset_2$ cross, we are able to put them together on the LHS of a Monge inequality, and thereby combine the sets $\cset_1,\cset_2$ to show that in $D$ they cannot be simultaneously smaller than $\cset'_1,\cset'_2$, respectively.

This example captures the main idea of showing the non-intersecting property between a forbidden area rising from the left segment and a forbidden area rising from the right segment. Their restricting pairs must be intertwined, and therefore we can use Monge property to derive the impossibility of their co-existence. This eventually implies the existence of a way of inserting the path $\pi_{a,b}$, and completes the greedy algorithm.

\section{Preliminaries}

By default, all graphs considered in this paper are \emph{directed} edge-weighted graphs.

Let $G$ be a plane graph (that is, a planar graph with its embedding). For a pair $u,v$ of its vertices, we denote by $\dist_G(u,v)$ the \emph{directed} distance in $G$ from $u$ to $v$ (that is, the minimum length of any directed path in $G$ from $u$ to $v$). We may omit the subscript $G$ when it is clear from the context.
It is clear that $\dist_G(\cdot,\cdot)$ is a quasi-metric on the vertices of $G$.

Let $T$ be a set of its vertices called \emph{terminals}. We say that $(G,T)$ is an Okamura-Seymour instance, iff in the planar drawing of $G$, all terminals lie on the boundary of the outer face.
We say that two pairs $(t_1,t_2),(t_3,t_4)$ of terminals \emph{cross}, iff $t_1,t_2,t_3,t_4$ are distinct terminals, and the order in which they appear on the boundary of the outer face is either $t_1,t_3,t_2,t_4$ or $t_1,t_4,t_2,t_3$.

\paragraph{Demands and routings.}
A \emph{demand} on a directed instance $(\vG,T)$ is a function $\D$ that assigns to each ordered pair $(t,t')$ of terminals an integer $\D(t,t')\ge 0$. 
Alternatively, we can view a demand $\D$ as a collection of pairs of terminals in $T$, where every pair $(t,t')$ appears $\D(t,t')$ times in the collection.
We say that $\D$ is \emph{unweighted} iff $\D(t,t')$ is either $0$ or $1$ for all $(t,t')$.
%
For an unweighted demand $\D$, a \emph{routing} of $\D$ in $G$ is a collection $\qset$ of directed paths in $\vG$ that contains, for each pair $(t,t')$ with $\D(t,t')=1$, a directed path in $\vG$ connecting $t$ to $t'$, such that all paths in $\qset$ are edge-disjoint.
We say that $\D$ is \emph{routable} in $G$ iff it has a routing in $G$.

\paragraph{Partial quasi-metrics.}
A \emph{partial quasi-metric} is a function $D: T\times T\to \mathbb{R}^+ \cup \set{*}$, such that
\begin{itemize}
	\item for all $t,t'\in T$, $D(t,t')= 0$ iff $t=t'$; and
	\item for all $t,t',t''\in T$, such that none of $D(t,t''),D(t,t'),D(t',t'')$ is $*$,
	\[D(t,t'')\le D(t,t')+D(t',t'').\]
\end{itemize}
Intuitively, partial quasi-metrics are quasi-metrics whose values are not yet completely determined, and for those values that are already determined, the triangle inequalities need to be satisfied.
For a graph $G$ that contains $T$, we say $G$ \emph{satisfies/realizes} a partial quasi-metric $D$, iff for every ordered pair $t,t'\in T$,
either $\dist_{\vG}(t,t')=D(t,t')$, or $D(t,t')=*$.

\section{Preparation: Nests, Edge-weight LP, and Restricting Pairs}

Let $D$ be a quasi-metric on $T$ and $\sigma$ a circular ordering on $T$ that satisfy the property in \Cref{thm: directed 4 point condition}. We will construct an Okamura-Seymour instance whose induced shortest path distance metric on $T$ is exactly $D$.
To be more specific, we will construct a specific type of Okamura-Seymour instances, called \emph{nests}, via a greedy approach, called \emph{path insertion}, that we introduce below.

\subsection{Graph structure: nests and Path Insertion}

\paragraph{Nests.} We say a directed Okamura-Seymour instance $(\vG,T)$ is a \emph{nest}, iff $\vG$ consists of a collection $\Pi$ of directed paths, such that 
\begin{itemize}
	\item every path $\pi\in \Pi$ has both endpoints in $T$;
	\item the paths in $\Pi$ are edge-disjoint; and
	\item $E(G)=\bigcup_{\pi \in \Pi}E(\pi)$.
\end{itemize}
When the set $T$ of terminals are clear from the context, we also say that $\vG$ is a nest.

\paragraph{Path Insertion.}
We now describe a generic process of constructing a nest, called \emph{path insertion}. We start with an empty disc with all terminals in $T$ lying on its boundary according to the ordering $\sigma$. The path insertion process performs iterations. In each iteration, we insert a directed path named $\pi_{t,t'}$ connecting an ordered pair $t,t'$ of terminals in $T$, and draw it as a directed curve $\gamma$ within the disc connecting its endpoints, such that 
\begin{itemize}
\item $\gamma$ is a simple curve from $t$ to $t'$;
\item for every other curve $\gamma'$ inserted before, curves $\gamma,\gamma'$ intersect at a finite number (could be $0$, in which case they are disjoint) of points, called their \emph{crossings}; and
\item at each crossing, the curves indeed cross each other (but not just touch and then depart).
\end{itemize}


At any time over the course of a path insertion process, a nest is obtained by viewing each crossing as a vertex and viewing each curve segment between a pair of its consecutive crossings as a directed edge.
See \Cref{fig: nest} for an illustration.

\begin{figure}[h!]
	\centering
	\subfigure[The path insertion process inserts $3$ paths, shown in red, green, and purple, respectively.]
	{\scalebox{0.12}{\includegraphics{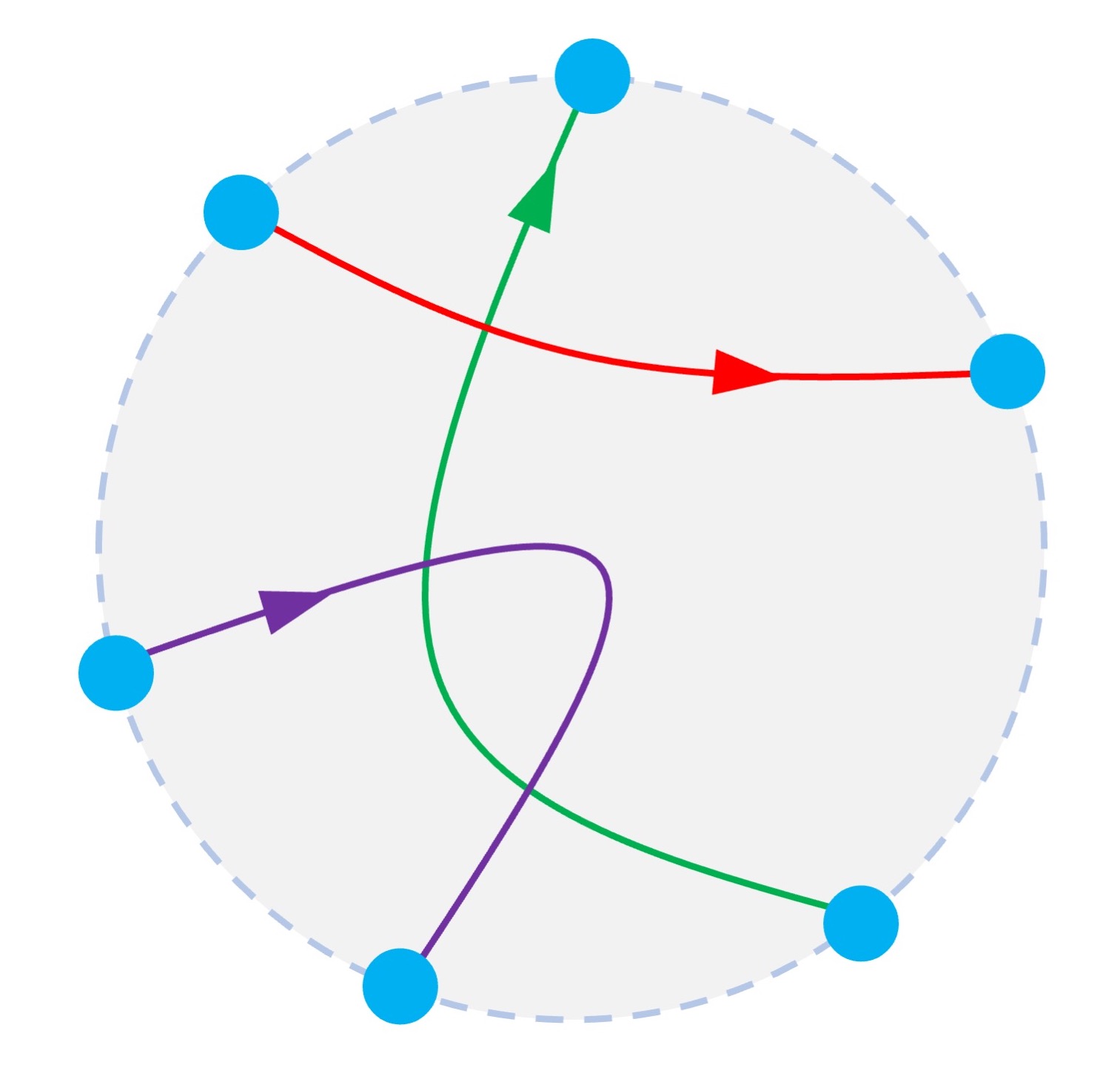}}}
	\hspace{1.7cm}
	\subfigure[The nest obtained from the path insertion process is a directed graph on $9$ vertices.]
	{
		\scalebox{0.12}{\includegraphics{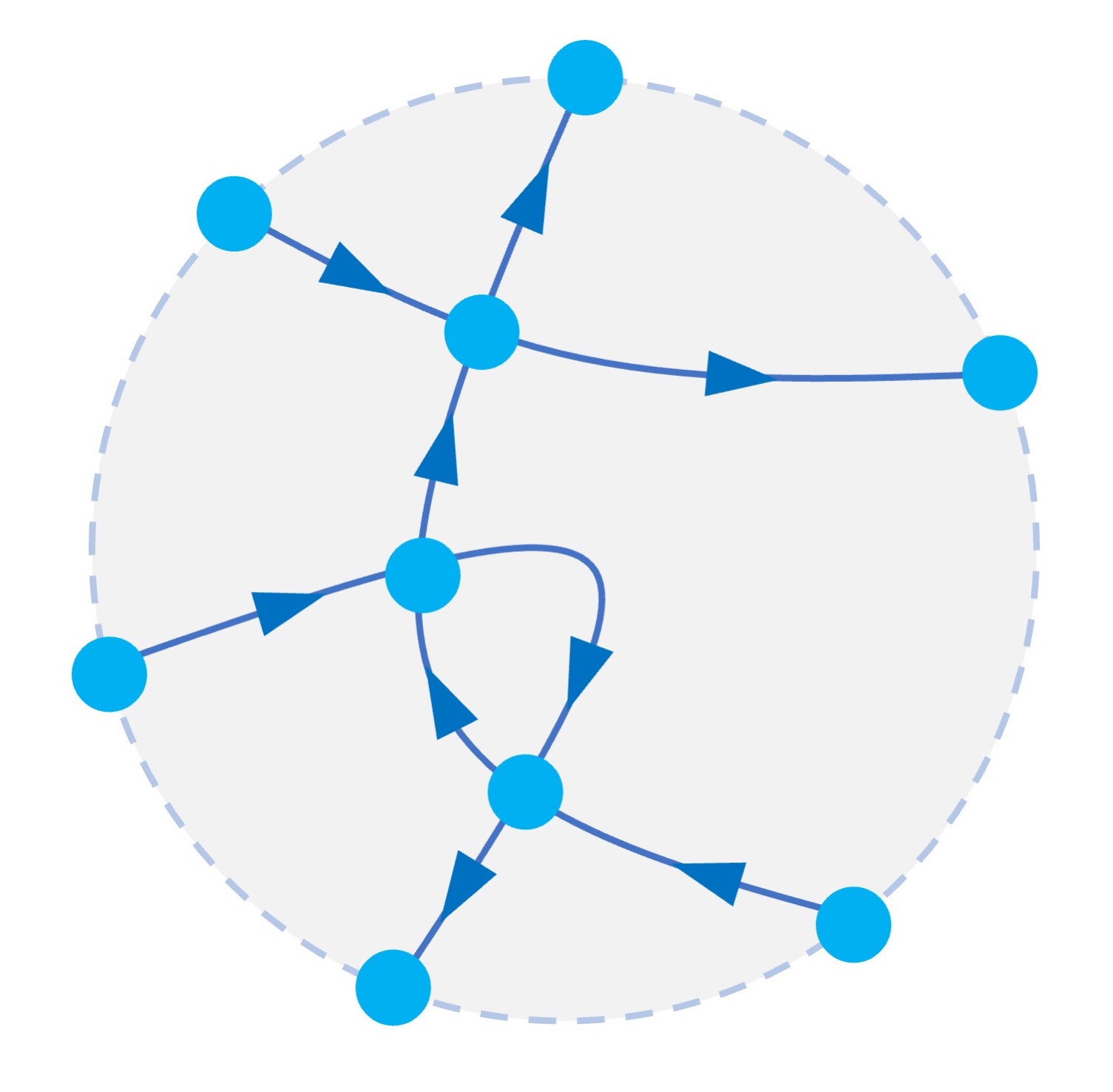}}}
	\caption{An illustration of a path insertion process and its produced nest.\label{fig: nest}}
\end{figure}

We view the input quasi-metric $D$ as given to us in a stream: we sequentially receive $|T|\cdot(|T|-1)$ tuples, each consisting of an ordered pair $(t,t')$ and their value $D(t,t')$. Equivalently, we are given a partial quasi-metric $D$ which initially assigns  $*$ to all pairs, and then it gets refined entry by entry until we get the complete quasi-metric $D$.

We will run the path insertion process along with the stream. Upon receiving the value $D(t,t')$ for a pair $(t,t')$, we insert the path $\pi_{t,t'}$ from $t$ to $t'$. Let $G$ be the nest produced by all paths inserted so far, so $G$ evolves with the stream. The goal is to ensure the following property:
\begin{properties}{P}
\item at any time in the stream, there exists a way of setting the edge weights of $G$ that makes it satisfy the current $D$; moreover, for every $(t,t')$, $\pi_{t,t'}$ is the shortest $t$-$t'$ directed path in $G$.
\label{prop}
\end{properties}
Note that, if this property is satisfied, then at the end of the process, we obtain a nest $G$, such that there exists a way of setting its edge weights to satisfy the whole quasi-metric $D$, and we are done.

\subsection{Setting the edge weights: LP and its dual}

For a nest $G$ to satisfy a partial quasi-metric $D$, the existence of edge weights can be characterized by the following LP (a feasibility LP without any objective function). We say that a pair $(t,t')$ is \emph{seen} if $D(t,t')\ne *$ (i.e., we have already received the pair $(t,t')$ and have inserted $\pi_{t,t'}$ in $G$).
\begin{eqnarray*}
	\mbox{(LP-Primal)}\quad	\quad 
	&\sum_{e\in E(\pi_{t,t'})}x_e\le D(t,t') &\forall \text{ seen }(t,t')\\
	&\sum_{e\in E(P)}x_e\geq D(t,t') &\forall \text{ seen }(t,t'),\forall \text{ di-path }P\text{ from }t \text{ to }t'\\
	&x_e\geq 0&\forall e\in E(G)
\end{eqnarray*}
By Farkas' lemma, the feasibility of (LP-Primal) is directly related to the feasibility of its dual.
To exploit it, we need the following flow-related definitions.
Let $\mathcal{P}$ be the collection of paths in $G$ with both endpoints in $T$. A \emph{terminal flow} $F: \mathcal{P}\to \mathbb{R}^+$ assigns each path $P\in\mathcal{P}$ with a value $F(P)$. 
For each edge $e\in E(G)$, we denote by $F_e$ the total amount of flow of $F$ sent through $e$. 
For a pair $F,F'$ of terminal flows, we say that $F$ \emph{dominates} $F'$ iff $F_e\ge F'_e$ for all $e\in E(G)$. 
For each pair $t,t'$ of terminals, we define $F_{t,t'}$ as the total amount of flow in $F$ from $t$ to $t'$. 
We define the \emph{cost} of $F$ as $\cost(F)=\sum_{t,t'\in T}F_{t,t'}\cdot D(t,t')$.

The following claim is an immediate corollary of Farkas' lemma \cite{farkas1898fourier} and the fact that $\mathbb{R}$ is dense.
\begin{claim}
	\label{clm: feasibility or flow}
	Either \textnormal{(LP-Primal)} is feasible, or    
	there exist terminal flows $F,F'$ in $G$, such that:
	\begin{itemize}
		\item $F,F'$ assign integral values to all paths;
		\item $F$ only assigns non-zero values to paths in $\set{\pi_{t,t'}\mid (t,t') \text{ seen}}$;
		\item $F$ dominates $F'$; and
		\item $\cost(F)<\cost(F')$.
	\end{itemize}\label{cl:farkas}
\end{claim}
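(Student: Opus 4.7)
The plan is to apply Farkas' lemma to (LP-Primal), reinterpret the non-negative dual multipliers as the terminal flows $F$ and $F'$, and then use a standard rationality argument to obtain integrality.

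First I would put (LP-Primal) into the canonical form $\{x\geq 0:\,Ax\leq b\}$: each upper-bound constraint contributes a row $\mathbf{1}_{E(\pi_{t,t'})}$ with right-hand side $D(t,t')$, and each lower-bound constraint contributes a row $-\mathbf{1}_{E(P)}$ with right-hand side $-D(t,t')$. Because any non-simple directed $t$-$t'$ path dominates some simple one, the lower-bound family can be restricted to simple paths, so the system has finitely many constraints and every entry of $A$ lies in $\{-1,0,1\}$. Farkas' lemma then says that (LP-Primal) is infeasible iff there exist non-negative multipliers $y_{t,t'}$ (for seen pairs) and $z_{t,t',P}$ (for seen pairs and simple $t$-$t'$ paths $P$) satisfying $\sum_{(t,t')\text{ seen}:\,e\in\pi_{t,t'}}y_{t,t'}\geq\sum_{(t,t'),P:\,e\in P}z_{t,t',P}$ for every edge $e$, together with the strict inequality $\sum_{(t,t')}y_{t,t'}D(t,t')<\sum_{(t,t'),P}z_{t,t',P}D(t,t')$.

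Next I would set $F$ to place $y_{t,t'}$ units of flow on $\pi_{t,t'}$ (and nothing on other paths in $\mathcal{P}$), and $F'$ to place $z_{t,t',P}$ units of flow on each path $P$. By construction $F$ is supported on $\{\pi_{t,t'}:(t,t')\text{ seen}\}$; the edgewise inequality above reads exactly $F_e\ge F'_e$, so $F$ dominates $F'$; and regrouping contributions by ordered terminal pair identifies the strict dual inequality with $\cost(F)<\cost(F')$.

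The last step is upgrading the witness to integral flows. Because the coefficients of $(y,z)$ in the edgewise inequalities all lie in $\{-1,0,1\}$, the cone $\mathcal{K}=\{(y,z)\geq 0:\,A^T(y,z)\geq 0\}$ is a rational polyhedral cone and is therefore the conic hull of finitely many rational extreme rays. Writing the real Farkas witness as a non-negative combination of these rays, the strict inequality $\cost(F)<\cost(F')$ must already be strict on at least one ray; clearing denominators on that ray produces integral $(y^*,z^*)\in\mathcal{K}$ satisfying the same strict inequality, and the associated flows are then integral and have all four claimed properties. This scaling step is precisely the ``density of $\mathbb{R}$'' invocation in the statement. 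I do not expect any serious obstacle: the argument is a routine pairing of Farkas with the rationality of $\mathcal{K}$, and the only step worth double-checking is the reduction to finitely many lower-bound constraints via simple paths, which keeps Farkas applicable.
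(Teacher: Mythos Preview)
Your proposal is correct and matches the paper's approach: the paper simply states that the claim is ``an immediate corollary of Farkas' lemma and the fact that $\mathbb{R}$ is dense,'' and your write-up spells out exactly that---Farkas applied to (LP-Primal) yields the dual multipliers $(y,z)$ that become $F,F'$, and the rational-cone/scaling argument is the precise version of the density step. The only cosmetic difference is that you justify integrality via rational extreme rays of the cone $\{(y,z)\ge 0:A^{T}(y,z)\ge 0\}$, whereas the paper gestures at density; both amount to the same standard move.
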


We will prove the following claim that, combined with
\Cref{clm: feasibility or flow}, immediately implies that Property \ref{prop} can be satisfied and completes the proof of \Cref{thm: directed 4 point condition}.

\begin{claim}
\label{clm: drawing}	
Upon receiving any pair $(t,t')$, there exists a way of inserting the path $\pi_{t,t'}$ to the current nest, such that in the resulting nest $G$, there are no flows $F,F'$ satisfying the properties of \Cref{cl:farkas}.
\end{claim}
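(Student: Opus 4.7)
\textbf{Proof plan for Claim~\ref{clm: drawing}.}

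Write $(a,b)\defeq(t,t')$ for the pair currently being inserted and assume inductively that Property~\ref{prop} held before the insertion. The plan is to argue by contradiction: suppose that for every choice of curve $\gamma$ realizing $\pi_{a,b}$, the resulting nest admits bad flows $F,F'$ as in \Cref{cl:farkas}. The first step is to turn each such pair $(F,F')$ into a combinatorial witness. Integrality lets me decompose $F$ into a multiset $\cset$ of seen terminal pairs (with $(a,b)\in\cset$, else the obstruction predates the insertion and violates the inductive hypothesis) and $F'$ into a multiset $\cset'$ of arbitrary directed paths in $G$. Dominance $F_e\ge F'_e$ together with flow conservation at each crossing of the nest forces $\cset$ and $\cset'$ to have the same source multiset and the same sink multiset; combined with $\cost(F)<\cost(F')$ this produces a \emph{restricting pair} $(\cset,\cset')$ with $D(\cset)<D(\cset')$ whose rerouting passes through $\gamma$.

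Next I would formalize, for each candidate restricting pair $(\cset,\cset')$ containing $(a,b)$, a \emph{forbidden region} in the open disc: the set of points $p$ such that any $\gamma$ through $p$ allows the rerouting $\cset\to\cset'$ to be realized using $\pi_{a,b}$ together with the previously inserted paths. The disc boundary is split by $a,b$ into a left arc $\ell$ and a right arc $r$, and I would show that each maximal forbidden region is anchored to exactly one of them, determined by which of $a,b$ initiates the rerouted sub-flow through $\gamma$. Consequently, the union of forbidden regions separates $a$ from $b$ in the disc if and only if some $\ell$-anchored region meets some $r$-anchored one, so it suffices to rule out the latter.

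The crux is therefore: given restricting pairs $(\cset_1,\cset'_1)$ (anchored to $\ell$) and $(\cset_2,\cset'_2)$ (anchored to $r$) both containing $(a,b)$, derive a contradiction. The opposite anchors force the boundary contacts of $\cset_1$ with $\ell$ to cross, in the circular order $\sigma$, the boundary contacts of $\cset_2$ with $r$. I would then apply the Monge four-point inequality to a crossing pair drawn from $\cset_1\cup\cset_2$, and iterate the resulting swap-and-recompose procedure exactly as in the worked example of \Cref{subsec: overview}; after finitely many swaps the left-hand side rewrites to $D(\cset'_1)+D(\cset'_2)$, contradicting $D(\cset_1)+D(\cset_2)<D(\cset'_1)+D(\cset'_2)$. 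Since no valid placement is forbidden, a good $\gamma$ exists, and \Cref{clm: feasibility or flow} then certifies that (LP-Primal) for the augmented nest is feasible, which is exactly Property~\ref{prop}.

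The step I expect to be hardest is the anchoring claim: a naive forbidden region can pinch off in the interior of the disc and fail to attach cleanly to a single boundary arc, so the decomposition of the bad flow must be refined first. I plan to restrict attention to \emph{minimal} witnesses $(F,F')$ — minimizing, say, $|\cset|+|\cset'|$ and then the total number of pairwise crossings among the pairs in $\cset\cup\cset'$ — which should force the rerouting path contributed by $F'$ to meet $\gamma$ in a single interval, making the anchor side unambiguous. Once anchoring is established, the Monge cancellation in the third paragraph is essentially mechanical, because each crossing among the terminal pairs in $\cset_1\cup\cset_2$ supplies exactly the four-point inequality needed to swap one pair and progress one step toward $\cset'_1\cup\cset'_2$.
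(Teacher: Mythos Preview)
Your plan is the geometric heuristic from \Cref{subsec: overview}, and you have correctly identified the anchoring step as the crux; but that step, as you have set it up, has a genuine gap that minimal witnesses do not close. The difficulty is that a ``forbidden region'' cannot be defined as a subset of the open disc independent of $\gamma$. Whether a given restricting pair $(\cset,\cset')$ becomes routable after inserting $\pi_{a,b}$ depends on the full trajectory of $\gamma$ --- which faces it visits and in what order --- not on a single point it passes through. Consequently there is no region $R_{(\cset,\cset')}\subseteq\mathrm{disc}$ with the property ``$\gamma$ is blocked by $(\cset,\cset')$ iff $\gamma\cap R_{(\cset,\cset')}\ne\emptyset$'', and so the planar duality step (``the regions disconnect $a$ from $b$ iff a left-anchored one meets a right-anchored one'') has no object to apply to. Taking minimal witnesses only normalizes the witness \emph{for a fixed} $\gamma$; it does not produce a $\gamma$-independent region, and different curves are in general blocked by structurally different restricting pairs, so the topological covering argument in your second paragraph does not go through.

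What the paper actually does is replace the geometric picture by a combinatorial induction. It first isolates the exact shape a restricting pair must have for the Monge chaining in your third paragraph to succeed: a decomposition of $\cset'$ into \emph{good pairs} $(\qset_i,\dset'_i)$, where each $\qset_i$ is a monotone sequence of crossing paths (this monotonicity is precisely what makes the iterated four-point swaps telescope). The main lemma then asserts that if \emph{every} way of inserting $\pi_{a,b}$ is blocked, such a well-structured restricting pair already exists in the nest before insertion. The proof is an induction not over curves but over \emph{walls}: partial left/right boundaries of a would-be corridor from $a$. For a complete wall the blocked curve supplies a witness directly; for an incomplete wall $W$ one takes the witnesses for its two one-edge extensions $W',W''$ and merges them by a case analysis (the ``switching'' operations) into a $W$-good witness. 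This wall induction is the substitute for your anchoring/covering argument, and it is where the real work lies --- your proposal does not yet contain an analogue of it.
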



We say that flows $F,F'$ are \emph{aligned}, iff for each terminal $t$, the total amount of flow in $F$ sent from (received by, resp.) $t$ is identical to the total amount of flow in $F'$ sent from (received by, resp.) $t$.
We prove the following observation that allows us to focus on ruling out aligned flows $F,F'$.

\begin{observation}
	\label{obs: aligned}
	In order to prove \Cref{clm: drawing}, it suffices to ensure that, in the resulting nest $G$, there are no aligned flows $F,F'$ satisfying the properties of \Cref{cl:farkas}.
\end{observation}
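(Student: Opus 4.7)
The plan is to prove the contrapositive: from any pair $(F,F')$ satisfying the properties of \Cref{cl:farkas} that fails to be aligned, construct an aligned pair $(\tilde F,\tilde F')$ still satisfying those properties. I will take $\tilde F=F$ and $\tilde F'=F'+H$, where $H$ is an integer terminal-to-terminal flow chosen to exactly close the marginal gap at every terminal.

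The key structural input is that, because $G$ is a nest, every path $\pi_{t,t'}$ used by $F$ is drawn as an interior curve of the disc and so cannot pass through any terminal besides $t,t'$; hence, writing $\mathrm{out}_F(t)$ and $\mathrm{in}_F(t)$ for the total flow sent from and received by $t$ in $F$, one has $\mathrm{out}_F(t)=\sum_{e\text{ out of }t}F_e$ and $\mathrm{in}_F(t)=\sum_{e\text{ into }t}F_e$. I would first preprocess $F'$ so the analogous identity holds: whenever $F'$ routes a $t_1\to t_2$ path through a third terminal $t_3$, split it at $t_3$. This does not touch the edge-values of $F'$, so edgewise domination $F\ge F'$ is untouched; integrality is preserved; and by the triangle inequality of $D$ the cost of $F'$ can only weakly increase, preserving $\cost(F)<\cost(F')$. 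After preprocessing, restricting edgewise domination to the edges incident to each terminal $t$ yields $\Delta(t):=\mathrm{out}_F(t)-\mathrm{out}_{F'}(t)\ge 0$ and $\nabla(t):=\mathrm{in}_F(t)-\mathrm{in}_{F'}(t)\ge 0$.

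The residual $F-F'$ is then a non-negative integer edge-flow with conservation at every non-terminal vertex, so a standard greedy forward-tracing — start at any terminal with positive residual out-flow, follow residual edges, stop at the first terminal reached — yields an integer terminal-to-terminal path-decomposition $H$ (with $H\le F-F'$ edgewise) satisfying $\mathrm{out}_H(t)=\Delta(t)$ and $\mathrm{in}_H(t)=\nabla(t)$ at every terminal. Setting $\tilde F=F$ and $\tilde F'=F'+H$, both flows are integer; $\tilde F$ still uses only $\pi$-paths; since $\tilde F'_e=F'_e+H_e\le F_e=\tilde F_e$, $\tilde F$ dominates $\tilde F'$; $\cost(\tilde F')=\cost(F')+\cost(H)\ge\cost(F')>\cost(F)=\cost(\tilde F)$; and the out/in marginals match at every terminal by construction, so $(\tilde F,\tilde F')$ is aligned. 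The main obstacle I expect is the preprocessing step: splitting an $F'$-path can produce a sub-path whose endpoint-pair has not yet been seen, which formally falls outside the cost-accounting framework of \Cref{cl:farkas}. I would handle this by tacitly extending $D$ to a full quasi-metric on $T\times T$ via shortest paths in the ``seen-pair'' graph; this extension agrees with $D$ on all seen pairs, satisfies the triangle inequality globally, and preserves the strict cost gap between $F$ and $F'$.
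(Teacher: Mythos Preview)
Your approach is correct and follows the same core idea as the paper: leave $F$ untouched and augment $F'$ by routing additional terminal-to-terminal flow inside the residual capacities $F_e-F'_e$, so that the marginals match. The paper executes this by adding a super-source $a$ and super-sink $b$, setting the capacity of $(a,t)$ to $\Delta(t)$ and of $(t,b)$ to $\nabla(t)$, observing that the ``push $F$ forward, push $F'$ back'' flow simultaneously saturates both the $a$-cut and the $b$-cut, and then taking any integral max $a$--$b$ flow and stripping off $a,b$ to obtain $\hat F'$. Your greedy residual decomposition after preprocessing achieves the same thing more combinatorially; the preprocessing is precisely what forces $\Delta(t)$ and $\nabla(t)$ to coincide with the edge-residuals at $t$, so that the greedy trace produces the right marginals. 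The paper's super-source/sink formulation sidesteps the need for preprocessing altogether, because the $(a,t)$ and $(t,b)$ capacities are defined directly in terms of the \emph{path}-marginals $\Delta(t),\nabla(t)$ rather than the edge-residuals, and the saturating flow exists regardless of whether $F'$ has through-flow at terminals.

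Your self-identified obstacle is not a real one. The input $D$ is a full quasi-metric on $T$; the ``seen/unseen'' distinction only governs which $\pi$-paths have been inserted and which LP constraints are present, not which values $D(t,t')$ are defined. The properties listed in \Cref{cl:farkas} place no seen-pair restriction on $F'$, and $\cost(F')=\sum_{t,t'}F'_{t,t'}D(t,t')$ is well-defined for any terminal flow. So after your preprocessing (or after adding $\hat F'$ in the paper's version), the augmented $F'$ may well send flow between unseen pairs, and the cost comparison $\cost(\tilde F')\ge\cost(F')>\cost(F)$ goes through using the full $D$ and its triangle inequality, with no extension needed.
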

\begin{proof}
For a pair of flows $F,F'$ satisfying the conditions in \Cref{cl:farkas}, we show that we can augment $F'$ to $F''$ such that flows $F,F''$ are aligned and still satisfy the conditions in \Cref{cl:farkas}.
Therefore, in order to prove \Cref{clm: drawing}, it is sufficient to ensure that there do not exist such aligned flows $F,F''$.

We construct a graph $H$ from $G$ as follows. We first give each edge $e\in E(G)$ capacity $F_e-F'_e$. We then add two new vertices $a,b$.
For each terminal $t$, we add a single arc from $a$ to $t$ with capacity $\sum_{t'\in T}F_{t,t'}-\sum_{t'\in T}F'_{t,t'}$ and a single arc from $t$ to $b$ with capacity $\sum_{t'\in T}F_{t',t}-\sum_{t'\in T}F'_{t',t}$. As $F$ dominates $F'$, $H$ is a graph with nonnegative integer weights.

We denote $\text{val}(F)=\sum_{t,t'\in T}F_{t,t'}$ and $\text{val}(F')=\sum_{t,t'\in T}F'_{t,t'}$.
First observe that there is an $a$-$b$ flow in $H$ with value $\text{val}(F)-\text{val}(F')$, and by our definition of edges capacities, such a flow saturates all edges incident to $a$ and $b$.
This is because, if we first send, for each $(t,t')$ and each $t$-$t'$ flow path $P$, $F(P)$ units of flow along the path $((a,t),P,(t',b))$, giving an $a$-$b$ flow of value $\text{val}(F)$, and then push back, along every flow path $P$ in its reverse direction, $F'(P)$ units of flow from $b$ to $a$, pushing back a total of $\text{val}(F')$ units, then eventually we obtain a flow of $\text{val}(F)-\text{val}(F')$ units from $a$ to $b$, that clearly satisfies all edge capacities in $H$.

We then compute the max-flow from $a$ to $b$ in $H$, which naturally induces a terminal flow $\hat F'$. We augment $F'$ with $\hat F'$ to obtain $F''$. Clearly, $F''$ and $F$ are aligned, and $F$ still dominates $F''$. As we have not modified $F$ at all, it still only sends flows along paths $\set{\pi_{t,t'}}$. Finally, as $F''$ is obtained from augmenting $F'$, it has a greater cost than $F'$, and so $\cost(F)<\cost(F')\le \cost(F'')$.
\end{proof}

Note that $F$ is completely determined by the values $\set{F_{t,t'}}$: $F$ has to send $F_{t,t'}$ units of flow along path $\pi_{t,t'}$ for each seen pair $(t,t')$. Comparatively, we can think of $F'$ as being determined in two steps: first we set the values $\set{F'_{t,t'}}$, then we view these values as a demand and find a routing of this demand in the $G$ where edges of each path $\pi_{t,t'}$ have capacity $F_{t,t'}$, as $F'$ needs to be dominated by $F$.
Note that the costs of $F$ and $F'$ are solely determined by the values $\set{F_{t,t'}}$ and $\set{F'_{t,t'}}$.

Assume that we have just received the terminal pair $(a,b)$ and value $D(a,b)$. As a preparation step for the proof of \Cref{clm: drawing}, we will first translate \Cref{clm: drawing} into the language of demands and routings, for which we need the following central notion.

\paragraph{Restricting Pairs.}
Let $\cset,\cset'$ be demands on $T$. We say that $\cset$ and $\cset'$ are \emph{aligned}, iff for each $t$,
\begin{itemize}
	\item the number of pairs $(t,\cdot)$ in $\cset$ equals the number of pairs $(t,\cdot)$ in $\cset'$; and 
	\item the number of pairs $(\cdot,t)$ in $\cset$ equals the number of pairs $(\cdot,t)$ in $\cset'$.
\end{itemize}

For a demand $\cset$, we define the edge-capacitated graph $G_{\cset}$ to be the nest $G$ such that edges of each path $\pi_{t,t'}$ have capacity $\cset(t,t')$, and we define 
$D(\cset)=\sum_{(t,t')\in \cset}D(t,t')$.

We say a pair of demands $(\cset,\cset')$ is an \emph{$(a,b)$-restricting pair}, iff
\begin{properties}{R}
	\item $\cset$ only contains seen pairs, and moreover, it contains the pair $(a,b)$;
	\item demands $\cset,\cset'$ are aligned; and
	\item $D(\cset) < D(\cset')$.\label{prop: cost}
\end{properties}


From \Cref{obs: aligned} and the above discussion, \Cref{clm: drawing} is equivalent to the following claim.

\begin{claim}
	\label{clm: demand}
	Upon receiving the pair $(a,b)$, there exists a way of inserting path $\pi_{a,b}$ such that in the resulting nest $G$, there are no $(a,b)$-restricting pair $(\cset,\cset')$ where $\cset'$ is routable in $G_\cset$.
\end{claim}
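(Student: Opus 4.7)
The plan is to follow the \emph{forbidden-area} framework sketched in \Cref{subsec: overview}. Let $G^-$ denote the current nest just before receiving $(a,b)$. For every pair $(\cset,\cset')$ satisfying the restricting-pair conditions with $(a,b)\in\cset$, the routability of $\cset'$ in $G_\cset$ depends on how $\pi_{a,b}$ is drawn inside the disc, because the capacities that $\cset$ places on the edges of $\pi_{a,b}$ are the only part of $G_\cset$ that is not already determined by $G^-$. The first step is to formalize the \emph{forbidden area} of $(\cset,\cset')$ as a closed subset of the disc with the property that $\pi_{a,b}$ must be drawn entirely outside it in order to prevent $\cset'$ from being routable in $G_\cset$. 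Our goal reduces to proving the existence of a simple directed curve from $a$ to $b$ in the disc that avoids every forbidden area.

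Next, I would decompose forbidden areas by their boundary anchor. Removing $\set{a,b}$ from the disc boundary yields two open arcs, a left arc $L$ (from $a$ clockwise to $b$) and a right arc $R$ (from $b$ clockwise to $a$). By planarity of $G^-$ together with the circular ordering $\sigma$, each forbidden area is anchored on (and extends inward from) exactly one of these two arcs. A connected barrier separating $a$ from $b$ inside the disc must therefore contain both a left-anchored forbidden area and a right-anchored one that meet. Hence it suffices to rule out any intersection between a left-anchored forbidden area $F(\cset_1,\cset_1')$ and a right-anchored forbidden area $F(\cset_2,\cset_2')$.

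The heart of the argument, and its main obstacle, is refuting such an intersection using the Monge property of $D$ with respect to $\sigma$. Assuming an intersection exists, I would read off from the planar crossing pattern a combinatorial interleaving between the witnessing routings of $\cset_1'$ and $\cset_2'$: in particular a terminal pair from $\cset_1'$ that crosses, in the sense of $\sigma$, a terminal pair from $\cset_2'$, playing the role that $(t_1,t_8)$ and $(t_7,t_2)$ play in the worked example. Starting from the strict inequality $D(\cset_1+\cset_2)<D(\cset_1'+\cset_2')$ obtained by adding the two restricting-pair inequalities, I apply a finite sequence of Monge swaps — each replacing two terminal pairs that cross in $\sigma$ by their uncrossed counterparts, weakly decreasing the $D$-sum — that rewrites $\cset_1'+\cset_2'$ into $\cset_1+\cset_2$, contradicting the strict inequality. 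The bookkeeping, where I expect the real work to lie, is showing that for every planar configuration in which $F(\cset_1,\cset_1')$ meets $F(\cset_2,\cset_2')$, one can prescribe a valid sequence of Monge swaps from $\cset_1'+\cset_2'$ all the way down to $\cset_1+\cset_2$, and not merely for the illustrative pattern depicted in \Cref{fig: forbidden}.

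Once every left-anchored forbidden area is disjoint from every right-anchored one, their union is a disjoint union of arc-anchored regions and does not separate $a$ from $b$ inside the disc. A simple curve from $a$ to $b$ avoiding every forbidden area therefore exists and may be taken as the drawing of $\pi_{a,b}$. The resulting nest admits no $(a,b)$-restricting pair $(\cset,\cset')$ with $\cset'$ routable in $G_\cset$, proving \Cref{clm: demand}. Combined with \Cref{obs: aligned} and \Cref{clm: feasibility or flow}, this yields the inductive step that maintains Property~\ref{prop} throughout the stream, completing the proof of \Cref{thm: directed 4 point condition}.
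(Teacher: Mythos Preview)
Your proposal faithfully reproduces the intuition from \Cref{subsec: overview}, but that section is an overview, not a proof, and the two crucial steps you identify are exactly where the argument is missing content rather than merely bookkeeping.

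First, the object ``forbidden area of $(\cset,\cset')$'' is never defined. In the overview example, $\cset$ contains one copy of $(a,b)$ and two other crossing pairs meeting at a single point $p$, and the forbidden region is the obvious triangle. In general $\cset$ may contain several copies of $(a,b)$ and arbitrarily many other seen pairs, $\cset'$ is an arbitrary aligned demand, and routability of $\cset'$ in $G_\cset$ is a global flow condition, not a local geometric one. It is not clear that the set of drawings of $\pi_{a,b}$ making $\cset'$ routable corresponds to a closed region at all, let alone one that is connected and anchored on a single boundary arc. Your decomposition into left-anchored and right-anchored regions, and the ensuing ``barrier must contain one of each'' step, therefore have no foundation.

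Second, even granting some notion of left/right forbidden areas, the Monge-swap argument you sketch is the entire difficulty, not a detail. The paper does \emph{not} carry out a direct geometric non-intersection proof. Instead it argues by contradiction and induction over \emph{walls} (partial corridors for $\pi_{a,b}$): assuming no drawing works, it shows (\Cref{lem: inevitable}) that the obstruction can be reorganized into a restricting pair $(\cset,\cset')$ together with a set $\qset\preceq\cset^-$ of paths decomposing into \emph{good sequences}~--- chains $(Q_1,\dots,Q_r)$ of crossing paths where each $Q_i$ precedes $Q_{i+1}$. This combinatorial structure is what makes the telescoping Monge chain
\[
D(a,b)+D(x_1,y_1)\ge D(a,y_1)+D(x_1,b),\quad D(x_1,b)+D(x_2,y_2)\ge D(x_1,y_2)+D(x_2,b),\quad\ldots
\]
go through uniformly; the wall induction (base case on complete walls, inductive step via the case analysis and the switching operations of \Cref{clm:self-config}) is precisely the mechanism that manufactures these good sequences from an arbitrary blocking configuration. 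Your proposal names the endpoint of this argument but supplies none of the machinery that reaches it.
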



\textbf{Remark.} If we allow the quasi-metric $D$ to take value $+\infty$. Then the proof is this section needs to be slightly adjusted to only inserting paths between pairs $(t,t')$ with $D(t,t')<+\infty$. Everything else stays the same.

\section{Completing the Proof of \Cref{thm: directed 4 point condition}}
\label{sec: Proof of clm: demand}

\newcommand{\bdl}{\partial_{\mathsf{L}}}
\newcommand{\bdr}{\partial_{\mathsf{R}}}

In this section, we provide the proof of \Cref{clm: demand}, thereby completing the proof of \Cref{thm: directed 4 point condition}. %

\subsection{Good sequences and good pairs}
\label{sec: good things}

First we introduce some definitions.
Recall that $a,b$ separate the boundary into two segments: the left segment $\bdl$ (from $a$ clockwise to $b$), and the right segment $\bdr$ (from $b$ clockwise to $a$). For points $x,x'$ both lying on the same segment, we say that $x$ \emph{precedes} $x'$ iff $x$ lies closer to $a$ than $x'$.


We say that a path $Q$ is \emph{crossing} iff its starting and ending points lie on different segments.
Let $Q$ ($Q'$, resp.) be a crossing path from $x$ to $y$ (from $x'$ to $y'$, resp.). We say that path $Q$ \emph{precedes} path $Q'$, iff either
\begin{itemize}
\item $x,x'$ lie on the same segment (and so do $y,y'$), $x$ precedes $x'$, and $y$ precedes $y'$; or 
\item $x,y'$ lie on the same segment (and so do $y,x'$), and $x$ precedes $y'$.
\end{itemize}

We say that a sequence $\qset=(Q_1,\ldots,Q_r)$ of crossing paths is \emph{good}, iff for each $i\ge 1$, $Q_i$ precedes $Q_{i+1}$.
Let $\qset$ be a good sequence and let $\dset'$ be a demand.
We say that $(\qset,\dset')$ is a \emph{good pair}, iff 
\begin{itemize}
	\item $\qset=(Q_1,\ldots,Q_r)$, where $Q_i$ connects $x_i$ to $y_i$; and
	\item $\dset'=\set{(a,y_1),(x_1,y_2),\ldots,(x_r,b)}$.
\end{itemize}

Let $\qset$ be a set of paths and $\cset$ a demand on $T$. We say $\qset\preceq \cset$ iff  $\qset$ does not violate the capacity of $G_{\cset}$. That is, for each $e\in E(G)$,
$\big|\set{Q\in \qset\mid e\in E(Q)}\big|\le \sum_{(t,t')\in \cset}\cset(t,t')\cdot \mathbf{1}[e\in \pi_{t,t'}]$.

To prove \Cref{clm: demand}, we will show that the only restricting pairs $(\cset,\cset')$ that enforce the routability of $\cset'$ in $G_\cset$ are, in some sense, good pairs.
In other words, if \Cref{clm: demand} is not true, in that no matter how we insert path $\pi_{a,b}$, there is always some routable restricting pair, then this must be because, even before the insertion of $\pi_{a,b}$, we are already doomed by some restricting pair that is a combination of good pairs.
The formal statement is provided below in \Cref{lem: inevitable}. Here we let $\cset^-$ be the demand obtained from $\cset$ by removing all pairs $(a,b)$. For a path $Q$ from $t$ to $t'$, we define $D(Q)=D(t,t')$, and for a set $\qset$ of paths, we define $D(\qset)=\sum_{Q\in \qset}D(Q)$.

\begin{lemma}
	\label{lem: inevitable}
	Assume \Cref{clm: demand} is not true (that is, there exist a quasi-metric $D$ and a circular ordering $\sigma$ on $T$, and an implementation of path insertion process, such that upon receiving  $(a,b)$, no matter how we insert the path $\pi_{a,b}$, in the resulting nest $G$, there is some $(a,b)$-restricting pair $(\bar \cset,\bar \cset')$ with $\bar \cset'$ routable in $G_{\bar \cset}$).
	Then in the current nest there exists an $(a,b)$-restricting pair $(\cset,\cset')$ and a set $\qset$ of paths, such that
	\begin{itemize}
		\item $\qset\preceq \cset^-$;
		\item $\qset=\qset_0\cup\big(\bigcup_{1\le i\le s}\qset_i\big)$, where for each $i\ge 1$, $\qset_i$ is a good sequence;
		\item $\cset'=\dset'_0\cup \big(\bigcup_{1\le i\le s}\dset'_i\big)$; 
		\item $\qset_0$ is a routing of $\dset'_0$; and
		\item for all $1\le i\le s$, $(\qset_i,\dset'_i)$ is a good pair.
	\end{itemize} 
\end{lemma}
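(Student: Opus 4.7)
The plan is to fix an arbitrary drawing of $\pi_{a,b}$ producing a nest $G^+$, invoke the guaranteed $(a,b)$-restricting pair $(\bar\cset,\bar\cset')$ (made aligned via \Cref{obs: aligned}), fix a routing $\rset$ of $\bar\cset'$ in $G^+_{\bar\cset}$ consisting of simple directed paths obtained by integral flow decomposition, and then repackage $\rset$ as the required structure in the current nest $G$, taking $(\cset,\cset')$ to be $(\bar\cset,\bar\cset')$ up to a padding operation.

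The core step is to decompose $\rset$ along $\pi_{a,b}$: each routing path alternates between \emph{side arcs} (maximal subpaths not using edges of $\pi_{a,b}$) and \emph{$\pi$-arcs} (maximal subpaths on $\pi_{a,b}$). The side arcs, viewed as directed curves in the planar embedding of $G^+$, are directed paths in the current nest $G$ (an edge of $G$ crossed by $\pi_{a,b}$ at an interior point gets split in $G^+$, but a side arc passing through that point corresponds to the single original edge in $G$). I plan to choose the flow decomposition so that its $\pi$-arcs partition into $s \le \bar\cset(a,b)$ \emph{streams}, each stream being a totally ordered sequence of edge-disjoint $\pi$-arcs along $\pi_{a,b}$, with the routing paths avoiding $\pi_{a,b}$ and any leftover side arcs absorbed into $\qset_0$ with demand $\dset'_0$. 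Padding $(\bar\cset,\bar\cset')$ with trivial $(a,b)$-pairs—which preserves alignment and the strict cost inequality $D(\bar\cset) < D(\bar\cset')$ since each added pair contributes $D(a,b)$ to both sides—allows me to further assume each stream's $\pi$-arcs tile $\pi_{a,b}$ from $a$ to $b$. The padded pairs are represented as ``trivial'' good pairs $(\qset_i,\dset'_i) = (\emptyset,\{(a,b)\})$.

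Each nontrivial stream then produces a good sequence $\qset_i$: consecutive $\pi$-arcs share a $\pi_{a,b}$-vertex $p_i$ at which $\pi_{a,b}$ crosses some other nest-path in $G$; glueing the suffix side arc of the $(i-1)$-th routing path to the prefix side arc of the $i$-th at $p_i$ yields a directed path $Q_i$ in $G$ with terminal endpoints on opposite boundary segments $\bdl,\bdr$ (the two spliced side arcs lie on opposite sides of $\pi_{a,b}$ by planarity and by the directedness of the incident nest-path at $p_i$). Collecting $\qset_i := (Q_1,\ldots,Q_r)$ and $\dset'_i := \{(a,y_1),(x_1,y_2),\ldots,(x_r,b)\}$ gives a good-pair candidate. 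Setting $\cset := \bar\cset$ and $\cset' := \bar\cset' = \dset'_0 \cup \bigcup_i \dset'_i$ (with padding) yields a candidate restricting pair, and $\qset := \qset_0 \cup \bigcup_i \qset_i$ satisfies $\qset \preceq \cset^-$ because the side arcs inherit $\rset$'s capacity-compliance on non-$\pi_{a,b}$ edges, which match the capacities of $G_{\cset^-}$ exactly.

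The main obstacle is twofold: verifying the precedence condition $Q_i \prec Q_{i+1}$ within each good sequence, and ensuring that the re-decomposition of the flow preserves the strict cost inequality $D(\cset) < D(\cset')$. For precedence, since the crossings $p_1,\ldots,p_r$ appear in order along the directed curve $\pi_{a,b}$ and each $Q_i$ crosses $\pi_{a,b}$ only at $p_i$ by the normalization, a planar topology argument in the disc—exploiting the directedness of $\pi_{a,b}$ as a curve separating the disc into an $L$-region containing $\bdl$ and an $R$-region containing $\bdr$, together with the simplicity of each $Q_i$—forces the terminal endpoints $x_i,y_i$ on $\bdl,\bdr$ to interleave with those of $Q_{i+1}$ in exactly the manner dictated by precedence, handling both the same-direction and opposite-direction sub-cases of the definition. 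For the cost inequality, I will argue that among all valid flow decompositions producing the normalized $\pi$-arc structure, one can be selected that, after padding, still satisfies $D(\cset) < D(\cset')$; this is the most delicate part of the argument and will require showing that re-decomposition swaps occur only at configurations that preserve the total demand cost.
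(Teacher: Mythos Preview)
Your approach departs fundamentally from the paper's: you fix one insertion of $\pi_{a,b}$ and try to read off good sequences directly from a single routing in $G^+$, whereas the paper runs a backwards induction on \emph{walls} (partial corridors for $\pi_{a,b}$) that, at every step, combines the good tuples coming from \emph{both} one-edge extensions $W',W''$ of the current wall. This combination is not cosmetic. In Case~3 of the inductive step, when the extended path $R^*_i$ extracted from the $W'$-side becomes non-crossing (both endpoints on the same boundary segment), the proof must splice it with a path $L^*_j$ coming from the $W''$-side to recover a crossing path, and it takes $\beta$ copies of the $W'$-tuple together with $\alpha$ copies of the $W''$-tuple to make the pairing work out. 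A single-drawing argument has no second tuple to splice with.

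The concrete failure in your outline is that the spliced paths $Q_i$ need not be crossing, and even when they are, need not satisfy precedence. A side arc in $G^+$ can pass through a vertex of $\pi_{a,b}$ using the two transverse nest-path edges there, so $Q_i$ can meet $\pi_{a,b}$ at points other than $p_i$; its terminal endpoints may then both lie on $\bdl$ (or both on $\bdr$). Your ``normalization'' is never specified, and any switching you perform to remove such extra crossings reassigns which terminal pairs land in $\cset'$---precisely the re-decomposition whose cost-preservation you yourself flag as unresolved. There is a second gap at the ends of each stream: nothing forces the routing path carrying the first $\pi$-arc to start at $a$ (it may enter $a$ through an edge of some $\pi_{\cdot,a}$ present in $\bar\cset$), so its demand is $(t,\cdot)$ with $t\ne a$ and cannot occupy the slot $(a,y_1)$ of the good-pair template; splitting that demand at $a$ destroys alignment, and swapping tails at $a$ again changes $\cset'$ with no control on $D(\cset')$. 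The paper avoids all of this by allowing wall-respecting splits in the base case (so the initial good sequences are just single corridor edges, trivially crossing and trivially ordered) and then undoing the splits one wall-edge at a time via the two-sided merge.
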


\Cref{lem: inevitable} is the main technical lemma of the paper, whose proof is provided later. Now we complete the proof of \Cref{clm: demand} using \Cref{lem: inevitable}.

\subsubsection*{Proof of \Cref{clm: demand} using \Cref{lem: inevitable}}
From \Cref{lem: inevitable}, if \Cref{clm: demand} is not true, then in the current nest there exists an $(a,b)$-restricting pair and its decompositions $\qset=\bigcup_{0\le i\le s}\qset_i$, $\cset'=\bigcup_{0\le i\le s}\dset'_i$, such that
for each index $1\le i\le s$, $(\qset_i,\dset'_i)$ is a good pair.
We now show that Monge property does not allow such a restricting pair. Specifically, we will show that if the quasi-metric $D$ satisfies the Monge property, then $D(\cset)\ge D(\cset')$, causing a contradiction to Property~\ref{prop: cost}.

First, since $\cset^-$ belongs to the nest before we insert path $\pi_{a,b}$, $\qset\preceq \cset^-$ implies that $D(\qset)\le D(\cset^-)$.
(Here we assume that the current nest is the first time (in the Path Insertion process) where \Cref{clm: demand} is not true.)
Since $\qset_0$ is a routing of $\dset'_0$, $D(\qset_0)=D(\dset'_0)$. For each $1\le i\le s$, let $\dset_i$ be the union of (i) the demand induced by $\qset_i$; and (ii) an additional pair $(a,b)$, so $\dset_i$ is aligned with $\dset'_i$. We now show that Monge property ensures that $D(\dset_i)\ge D(\dset'_i)$, which this implies that $D(\cset)\ge D(\cset')$, as  
\[
\begin{split}
D(\cset) & = D(\cset^-)+s\cdot D(a,b)\ge D(\qset)+s\cdot D(a,b)=\sum_{0\le i\le s}D(\qset_i)+s\cdot D(a,b)\\& = D(\qset_0)+\sum_{1\le i\le s}\bigg(D(\qset_i)+ D(a,b)\bigg)
= D(\dset'_0)+\sum_{1\le i\le s}D(\dset_i) \ge D(\dset'_0)+\sum_{1\le i\le s}D(\dset'_i) \ge D(\cset').
\end{split}
\]

Note that $\dset_i=\set{(a,b),(x_1,y_1),\ldots,(x_r,y_r)}$ and $\dset'_i=\set{(a,y_1),(x_1,y_2),\ldots,(x_r,b)}$. 
Since $Q_1$ is a crossing path, the pair $(x_1,y_1)$ crosses $(a,b)$, so from Monge property, 
$$D(a,b)+D(x_1,y_1)\ge {\color{red}D(x_1,b)}+D(a_1,y_1).$$
Then, since $Q_1$ precedes $Q_2$, either 
\begin{itemize}
\item $x_1,x_2$ lie on the same segment, and $x_1$ precedes $x_2$, which means $(x_1,b)$ crosses $(x_2,y_2)$; or 
\item $x_1,y_2$ lie on the same segment, and $x_1$ precedes $y_2$, which also means $(x_1,b)$ crosses $(x_2,y_2)$.
\end{itemize} 
Therefore, from Monge property, 
$${\color{red}D(x_1,b)}+D(x_2,y_2)\ge {\color{blue}D(x_2,b)}+ D(x_1,y_2).$$
Similarly, we can deduce that $(x_2,b)$ crosses $(x_3,y_3)$, so from Monge property, 
$${\color{blue}D(x_2,b)}+D(x_3,y_3)\ge {\color{green}D(x_3,b)}+ D(x_2,y_3).$$
We apply similar arguments to each $i$ and then sum up all inequalities, getting that
$$D(\dset_i)=D\big((a,b),(x_1,y_1),(x_2,y_2),\ldots,(x_r,y_r)\big)\ge D\big((a,y_1),(x_1,y_2),(x_2,y_3),\ldots,(x_r,b)\big)=D(\dset'_i).$$

\subsection{Proof of \Cref{lem: inevitable}}

Let $G$ be the nest before the iteration of inserting the path $\pi_{a,b}$.
The \emph{trajectory} of $\pi_{a,b}$ is defined to be the sequence of faces of $G$ that it intersects. 
Here we carve out a tiny disc around $a$, making it face $F_a$, and define face $F_b$ for $b$ similarly, so an $a$-$b$ curve/path always starts at $F_a$ and ends at $F_b$.
As $\pi_{a,b}$ is not allowed to contain any intersections in $G$, the trajectory characterizes its drawing. Therefore, we do not distinguish between a path/curve and its trajectory. 

Throughout the proof, we will not only consider curves connecting $a$ to $b$ but also curves starting at $a$ and ending at some face inside the disc.
We will prove \Cref{lem: inevitable} by induction on these $a$-starting curves. Due to technical reasons, in addition to trajectories, we will also describe these $a$-starting curves by \emph{walls}, which intuitively form corridors that govern the trajectories of curves. The notions of good sequences and good pairs also need to be adjusted with walls.

\paragraph{Walls, and growing the walls.}
Let $(F_a,F_1,\ldots,F_{r},F_b)$ be the trajectory of an $a$-$b$ path. 
Denote $F_0=F_a$ and $F_{r+1}=F_b$ for convenience.
For each $i\ge 1$, let $e_i$ be the edge shared by faces $F_{i-1}$ and $F_i$, so edges $e_i, e_{i+1}$ separate the boundary of face $F_i$ into two segments: one from $e_i$ clockwise to $e_{i+1}$ (excluding both), that we call the \emph{left wall} of $F_i$, and the other from $e_{i}$ counter-clockwise to $e_{i+1}$ (excluding both), that we call the \emph{right wall} of $F_i$.
We define left/right walls for $F_0,F_{r+1}$ similarly.
The left wall of the trajectory is defined to be the union of left walls of all faces $F_0,\ldots,F_{r+1}$, and the right wall of the trajectory is defined symmetrically.
We call such walls \emph{complete walls}, and we call the edges $e_1,\ldots, e_r$ \emph{corridor edges}.
 See \Cref{fig: wall}.

Alternatively, we can also think of the wall as being built gradually, as we go from $F_a$ to $F_b$ along the curve.
We initiate the left/right walls as $\wl,\wri=\emptyset$, and we place a \emph{guiding point} on each of them, which is initially at the left/right intersection between the tiny $a$-disc and the boundary (i.e., the endpoints of ``edge $e_0$''). As we enter the face $F_i$, the left guiding point starts to move from the endpoint of $e_i$ to along the left wall of $F_i$ to the endpoint of $e_{i+1}$, and the right guiding point move similarly on the right wall.
After they both reach the endpoints of $e_{i+1}$, we move from face $F_i$ to face $F_{i+1}$.
We keep moving guiding points until they both reach the intersections between the tiny $b$-disc and the boundary (i.e., the endpoints of ``edge $e_{r+1}$''). The wall-growing process can be naturally broken down into discrete steps: at each step, we either let the left guiding point or let the right guiding point travel along one edge.

As we will consider general $a$-starting curves (not necessarily  ending at $b$), we think of the above wall-growing process as terminable at any step. At each step, the left (right, resp.) wall is defined to be the set of all edges that the left (right, resp.) guiding point has travelled so far. For convenience, we will also refer to this ``incomplete walls'' as walls.

\begin{figure}[h!]
	\centering
	\subfigure[A complete wall from $a$ to $b$, with some corridor edges shown in orange.]
	{\scalebox{0.14}{\includegraphics{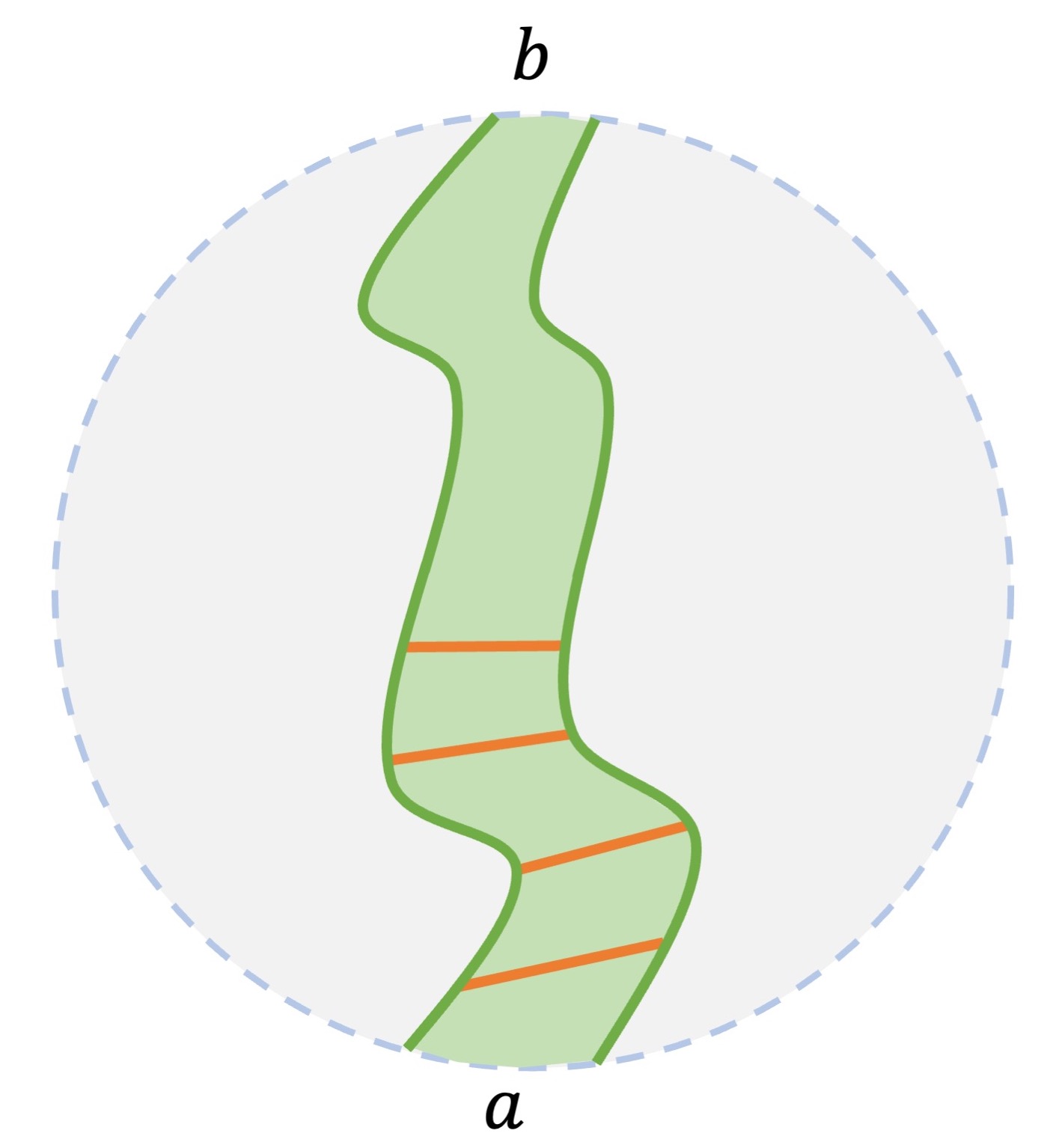}}}
	\hspace{1.4cm}
	\subfigure[An incomplete wall. The paths $Q,Q',Q''$ have their endpoints $x,x',x''$ on the right wall, where $x$ precedes $x''$, and $x''$ precedes $x'$. The Euler tour is shown in purple dashed line.]
	{
		\scalebox{0.14}{\includegraphics{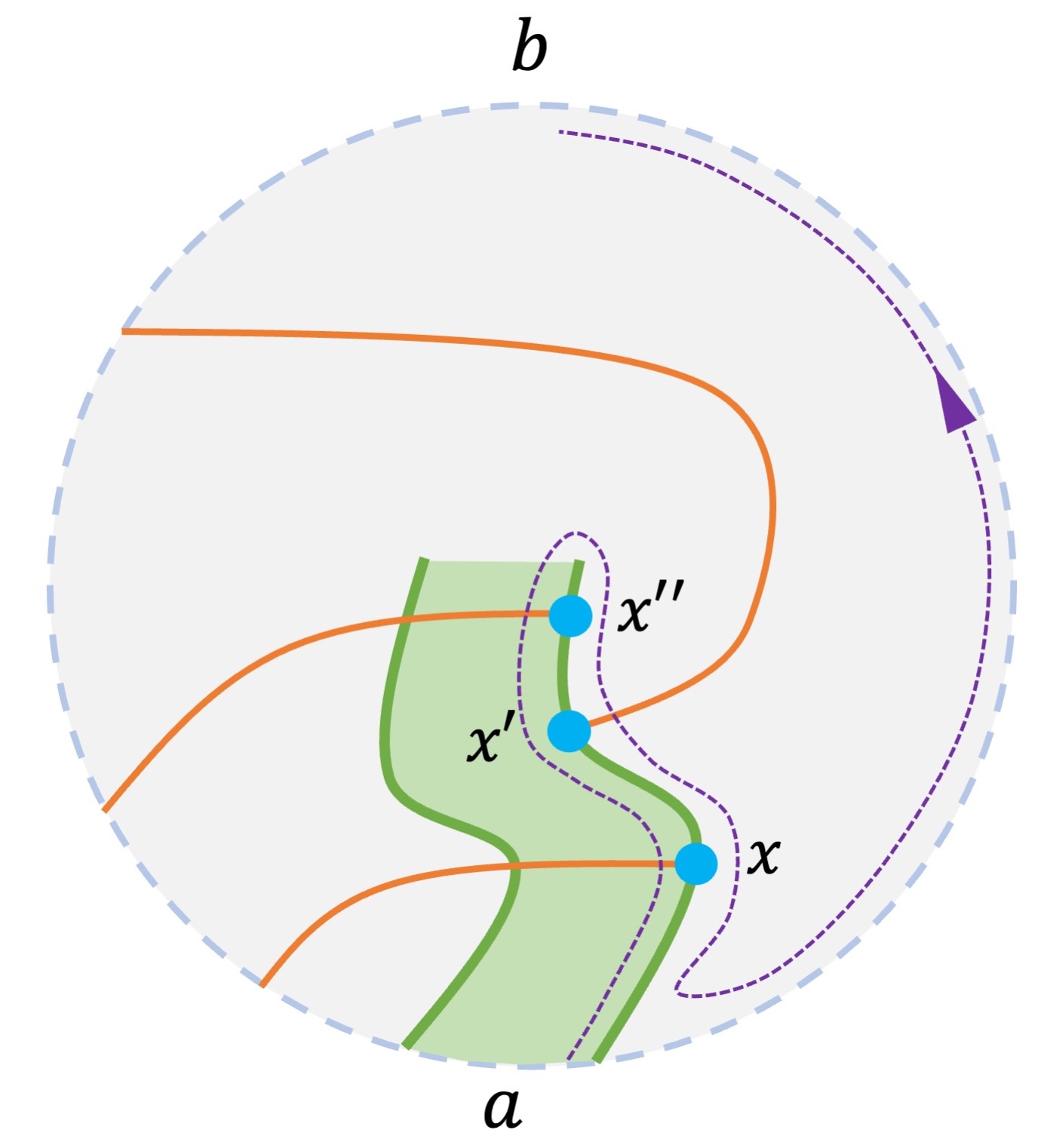}}}
	\caption{An illustration of complete and incomplete walls.\label{fig: wall}}
\end{figure}

\paragraph{Good sequences, good pairs and good tuples with respect to walls.}
Let $W=(\wl,\wri)$ be a wall.
We say a path $Q$ is crossing iff it has one endpoint in $\bdl\cup \wl$ and the other endpoint in $\bdl\cup \wri$. 
We now call $\bdl\cup \wl$ the \emph{left segment} and $\bdr\cup \wri$ the \emph{right segment}.
Let $Q$ ($Q'$, resp.) be a $W$-crossing path from $x$ to $y$ (from $x'$ to $y'$, resp.). We say that $Q$ \emph{$W$-precedes} $Q'$, iff
either
\begin{itemize}
	\item $x,x'$ lie on the same segment (say $\bdl\cup \wl$, and so $y,y'$ lie on $\bdr\cup \wri$), and $x$ \emph{$W$-precedes} $x'$, which is defined as follows: as we move along the Euler-tour around the left segment, starting from the ``corridor side'' of $\wl$ and ending at $b$, we encounter the $x$-incident edge in $Q$ before the $x'$-incident edge of $Q'$ (see \Cref{fig: wall}); and additionally, $y$ $W$-precedes $y'$; or 
	\item $x,y'$ lie on the same segment (and so do $y,x'$), and $x$ precedes $y'$.
\end{itemize}

We say that a sequence $\qset=(Q_1,\ldots,Q_r)$ of crossing paths is \emph{$W$-good}, iff for each $i\ge 1$, path $Q_i$ $W$-precedes path $Q_{i+1}$.
Let $\qset$ be a $W$-good sequence and let $\dset'$ be a demand.
%
%
We say that $(\qset,\dset')$ is a \emph{$W$-good pair}, iff
\begin{itemize}
	\item for each $1\le i\le r$, path $Q_i$ is from $x_i$ to $y_i$; and
	\item $\dset'=\set{(a,y_1),(x_1,y_2),\ldots,(x_r,b)}$.
\end{itemize}
It is easy to verify that, when the wall $W$ is empty (that is, $\wl=\wri=\emptyset$), all $W$-related notions degenerate to the notions (good sequences, good pairs) defined at the beginning of \Cref{sec: good things}.

\paragraph{Switching.} The major operation we will perform in constructing $W$-good sequences is \emph{switching}. Let $Q,Q'$ be paths intersecting at $p$. Denote by $Q_1$ the subpath of $Q$ between its start point and $p$, and by $Q_2$ the subpath of $Q$ between $p$ and its end point. We define $Q'_1,Q'_2$ similarly. We say that the concatenated paths $Q_1\cup Q'_2$ and $Q_2\cup Q'_1$ are obtained by \emph{switching paths $Q_1,Q_2$ at the intersection $p$.} For a set $\qset$ of paths, switching $\qset$ means a series of switching operations on pairs of paths in $\qset$.
We prove the following claim, which will be useful for our proof.

\begin{claim} \label{clm:self-config}
	Let $\qset$ be a $W$-good sequence, and
	let $\qset'$ be obtained from $\qset$ by replacing some path $Q_i$ from $x_i$ to $y_i$ with another path $Q'_i$ from $x_i$ to $y'_i$, where $y_i$ precedes $y'_i$. 
	Then we can switch $\qset'$ to obtain another $W$-good pair.
\end{claim}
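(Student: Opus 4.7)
The plan is to restore the $W$-good ordering of $\qset'$ through a sequence of switching operations, essentially a bubble sort on the endpoint sequence $y_1,\ldots,y_{i-1},y'_i,y_{i+1},\ldots,y_r$. First I would observe that the only $W$-precedence relations in $\qset'$ that can fail are those between $Q'_i$ (in position $i$) and the later paths $Q_{i+1},\ldots,Q_r$. Indeed, the pair $(Q_{i-1},Q'_i)$ is still $W$-good because $x_{i-1}$ $W$-precedes $x_i$ and $y_{i-1}$ $W$-precedes $y_i$, which precedes $y'_i$ by hypothesis; and no pair not involving position $i$ is affected by the replacement. Let $j^{*}$ be the largest index such that $y_{j^{*}}$ precedes $y'_i$, with the convention $j^{*}=i$ if $y'_i$ precedes $y_{i+1}$. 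If $j^{*}=i$ then $\qset'$ is already $W$-good and no switching is needed.

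Otherwise, I would show that for each $i<k\le j^{*}$, the paths $Q'_i$ and $Q_k$ must cross inside the disc: $x_i$ and $x_k$ lie on the left segment $\bdl\cup \wl$ with $x_i$ $W$-preceding $x_k$, while $y'_i$ and $y_k$ lie on the right segment $\bdr\cup \wri$ with $y_k$ preceding $y'_i$, so by the Jordan curve theorem the two paths, drawn as curves inside the disc, are forced to intersect. I would then run the following bubble sort. Pick any intersection point $p$ between the path currently in position $i$ (initially $Q'_i$) and $Q_{i+1}$ in position $i+1$, and switch these two paths at $p$: if the current position-$i$ path splits at $p$ as $A_1\cup A_2$ and $Q_{i+1}=B_1\cup B_2$, the switched paths are $A_1\cup B_2$ (from $x_i$ to $y_{i+1}$, placed in position $i$) and $B_1\cup A_2$ (from $x_{i+1}$ to $y'_i$, placed in position $i+1$). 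The multiset of edges used by the sequence is unchanged, and each resulting path is still $W$-crossing.

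After this first switch, the position-$i$ path ends at $y_{i+1}$, which precedes every later endpoint in $\set{y_{i+2},\ldots,y_r,y'_i}$, so positions $\le i$ are now in $W$-good order with everything after them. If $y'_i$ precedes $y_{i+2}$ we are done; otherwise I would repeat the switch, now between the position-$(i+1)$ path (ending at $y'_i$) and $Q_{i+2}$, and so on. Each iteration moves the ``wandering'' endpoint $y'_i$ one position to the right in the sequence, and after at most $j^{*}-i$ switches it lands in position $j^{*}$, where $y'_i$ precedes $y_{j^{*}+1}$ by the definition of $j^{*}$, so the result is a $W$-good sequence.

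The main delicate point I would expect to navigate is the verification that the ``precedes'' and ``$W$-precedes'' relations on endpoints lying on wall edges behave compatibly with the bubble-sort operations, so that after each switch the endpoints of the modified paths occupy the expected positions in the Euler tours on $\bdl\cup \wl$ and $\bdr\cup \wri$. Since each switch only exchanges two endpoints on the right segment that are already known to be out of order and leaves all start points fixed, the check reduces to transitivity of the Euler-tour order on each segment together with the original $W$-good property of $\qset$ for the unmodified pairs; the crossing step via Jordan's theorem is standard once the endpoint orderings on the two segments are opposite.
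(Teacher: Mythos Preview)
Your approach is the same bubble-sort via switching that the paper uses, and the overall structure is correct. There is, however, a real gap: you implicitly assume throughout that all start points $x_j$ lie on one segment and all end points $y_j$ on the other. The definition of ``$Q$ $W$-precedes $Q'$'' has \emph{two} cases: either the start points $x,x'$ lie on the same segment (and then both $x$ $W$-precedes $x'$ and $y$ $W$-precedes $y'$ are required), or $x$ and $y'$ lie on the same segment (and then only $x$ $W$-precedes $y'$ is required). In a $W$-good sequence the paths may alternate direction, so your assertion that ``$x_i$ and $x_k$ lie on the left segment with $x_i$ $W$-preceding $x_k$'' is not available a priori, and your definition of $j^{*}$ via ``$y_{j^{*}}$ precedes $y'_i$'' presupposes that all the $y_k$ are even comparable with $y'_i$. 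The same issue appears in your check that $Q_{i-1}$ still $W$-precedes $Q'_i$: your stated reason covers only the same-direction case.

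The missing ingredient is precisely the observation the paper makes: if $Q'_i$ \emph{fails} to $W$-precede $Q_{i+1}$, then we cannot be in the second case of the definition (that case depends only on $x_i$, which has not changed from $Q_i$), so necessarily $x_i,x_{i+1}$ lie on the same segment, $y_i,y_{i+1}$ lie on the same segment, and $y_{i+1}$ precedes $y'_i$. This is what forces the crossing configuration you need for the Jordan-curve step and the switch. Conversely, the first time the next path goes in the opposite direction, precedence is automatic and the process halts there. Once you insert this one-line case analysis at each iteration (rather than fixing a global $j^{*}$ up front), your argument goes through and coincides with the paper's.
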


\begin{proof}
	Denote $\qset=(Q_1,\ldots,Q_r)$, where $Q_i$ connects $x_i$ to $y_i$ for each $i\ge 1$, and path $Q'_i$ connects $x_i$ to $y'_i$.
	Let $\dset=\set{(a,y_1),(x_1,y_2),\ldots,(x_r,b)}$ be the associated demand of $\qset$. 
	Since $y_i$ precedes $y'_i$, path $Q_{i-1}$ precedes $Q'_{i}$. If path $Q'_i$ precedes $Q_{i+1}$, then $\qset'$ is still a $W'$-good sequence and we are done.
	If $Q'_{i}$ no longer precedes $Q_{i+1}$, then $x_i$ and $y_{i+1}$ do not lie on the same segment, so $x_i, x_{i+1}$ lie on the same segment, and therefore $x_i$ precedes $x_{i+1}$, $y_i$ precedes $y_{i+1}$, and $y_{i+1}$ precedes $y'_i$.
	Therefore, $(x_i,y'_i)$ and $(x_{i+1},y_{i+1})$ cross. We switch the intersection between paths $Q'_i$ and $Q_{i+1}$, obtaining a path from $x_i$ to $y_{i+1}$ satisfying the demand $(x_i,y_{i+1})$ in $\dset$ and another path from $x_{i+1}$ to $y'_i$, which we denote by $Q'_{i+1}$, and replace paths $Q'_i,Q_{i+1}$ in $\qset$ with $Q'_{i+1}$. Comparatively, $Q'_{i+1}$ can be seen as obtained from $Q_{i+1}$ by moving its endpoint from $y_{i+1}$ down to $y'_i$. Therefore, we can repeat this argument for paths $Q'_{i+1}$ and $Q_{i+2}$ and so on, until either some $Q'_{j}$ precedes $Q_{j+1}$ or we switch all the way to get $Q'_r$ as the last path. Either case, we obtain a $W$-good sequence.
\end{proof}



We say that a tuple $(\qset,\cset,\cset')$ is a \emph{$W$-good tuple}, iff
\begin{itemize}
\item $(\cset,\cset')$ is an $(a,b)$-restricting pair;
\item $\cset'$ can be wall-respecting split into $\dset$ (that is, $\dset$ is obtained from $\cset'$ by replacing each demand $(x,y)$ with a set $\set{(x,z_1),(z_1,z_2),\ldots,(z_t,y)}$ of demands, where each $z_i$ lies on either the left segment or the right segment); and $\dset$ can be decomposed into $\dset=\dset_0\cup\big(\bigcup_{1\le i\le s}\dset_i\big)$;
\item $\qset\preceq G^{\gamma}_{\cset}$, and a decomposition $\qset=\qset_0\cup\big(\bigcup_{1\le i\le s}\qset_i\big)$, such that
\begin{itemize}
\item $\qset_0$ is a routing of $\dset_0$; and
\item for each $1\le i\le s$, $(\qset_i,\dset_i)$ is a $W$-good pair.
\end{itemize}  
\end{itemize}

\paragraph{The structure of induction.}
We will prove \Cref{lem: inevitable} by induction, and show that, if \Cref{clm: demand} is not true, then in the current nest, for any wall $W$, there exists a $W$-good tuple. Clearly, \Cref{lem: inevitable} is the special case where $W$ is the empty wall (i.e., $\wl=\wri=\emptyset$).
It is also easy to see that the number of all possible walls is finite.
We will start from the cases where $W$ is a complete wall. For the inductive step, we consider any incomplete wall $W$. By definition of the wall-crowing process, we can only grow $W$ by either extending its left wall by an edge (and obtain $W'$) or extending its left wall by an edge (and obtain $W''$). Assuming both $W'$ and $W''$ admit good tuples, we will show that there is also a $W$-good tuple.

\paragraph{Base case.}
The base case is when $W$ is a complete wall. By definition, $W$ corresponds to an $a$-$b$ curve $\gamma$ whose trajectory is guided by $W$.
From the assumption (that Claim $6$ is false), if we insert $\gamma$ as the path $\pi_{a,b}$, then there is an $(a,b)$-restricting pair $(\bar{\cset},\bar{\cset}')$ with $\bar{\cset}'$ routable in $G_{\bar{\cset}}$. 

%
Consider the routing of $\bar{\cset}'$ in $G_{\bar{\cset}}$ and only focus on the demands that use part of $\pi_{a,b}$.
For each such demand $(z,w)$, assume its routing $Q=Q[z,p_1]\cup Q[p_1,p_2] \cup Q[p_2,w]$, where $Q[p_1,p_2]=Q\cap \pi_{a,b}$. Denote by $x_1,y_1$ the endpoints of the corridor edge that contains $p_1$, such that $x_1\in Q[z,p_1]$, and denote by $x_2,y_2$ the endpoints of the corridor edge that contains $p_2$, such that $y_2\in Q[p_2,w]$.
The demand $(z,w)$ can be then wall-respecting split into $(z,x_1),(x_1,y_2),(y_2,w)$, where $(z,x_1)$ is satisfied by $Q[z,x_1]$ and $(y_2,w)$ is satisfied by $Q[y_2,w]$. We are only left with a corridor edges $(x_1,y_1),(x_2,y_2)$, which are then added to $\qset'$. Since corridor edges have endpoints on left and right walls and do not cross each other, they automatically form $W$-good sequences.


\subsection*{Inductive step}

As the induction step, we will prove the following claim.

\begin{claim}
\label{clm: induction}
Let $W$ be a wall. Let $W'$ ($W''$, resp.) be the wall obtained from $W$ by growing an edge on its left (right, resp.) wall (i.e., allowing the left guiding point to travel one more edge).
If there is a $W'$-good tuple and a $W''$-good tuple, then there is a $W$-good tuple.
\end{claim}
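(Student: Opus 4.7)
The plan is to prove \Cref{clm: induction} by combining the two given $W'$- and $W''$-good tuples into a single $W$-good tuple, via a case analysis followed by a switching-based reconfiguration.

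Let $e_L$ with endpoints $v_L, v_L'$ be the edge added when extending $W$ to $W'$ (so $v_L$ is the current left guiding point and $v_L'$ becomes the new one), and symmetrically let $e_R$ with endpoints $v_R, v_R'$ be the edge added when extending $W$ to $W''$. The first observation I would establish is a simplification: if the given $W'$-good tuple uses the vertex $v_L'$ neither as a path endpoint in any $\qset'_i$ with $i \geq 1$, nor as a wall-respecting splitting point in refining $\cset'$ into $\dset$, then it is automatically a $W$-good tuple. Indeed, every path with both endpoints in $(\bdl \cup \wl) \cup (\bdr \cup \wri)$ is $W$-crossing, and the $W$- and $W'$-preceding orders coincide on such paths because the only difference between their Euler tours is the single edge $e_L$ at the very start of the $W'$ tour, which no relevant path reaches under the assumption. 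A symmetric statement holds for the $W''$-good tuple and $v_R'$.

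Thus the remaining case is when the $W'$-good tuple has a path in some $\qset'_i$ ($i \geq 1$) ending at $v_L'$ (or a splitting point equal to $v_L'$), and the $W''$-good tuple has the analogous obstruction at $v_R'$. Here I plan to take the $W'$-good tuple and reconfigure it: every path $Q$ in $\qset'_i$ ending at $v_L'$ is extended by prepending/appending the single edge $e_L$ so that its new endpoint becomes $v_L \in \wl$, turning it into a $W$-crossing path, and the corresponding pair in $\dset'_i$ is updated accordingly; any splitting point at $v_L'$ is shifted to $v_L$ and the adjacent routing pieces in $\qset'_0$ are concatenated. Since these extensions may disrupt the monotone $W$-preceding order of an affected sequence, I would apply \Cref{clm:self-config} iteratively, switching each out-of-order extended path with its successor at an interior intersection until the $W$-good order is restored.

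The principal obstacle I anticipate is bookkeeping the capacity constraint $\qset \preceq G^{\gamma}_{\cset}$ after these extensions, since each extension adds one unit of load to $e_L$ (which was a wall edge in $W'$ but is an interior corridor edge in $W$). This is precisely where the $W''$-good tuple must enter the argument—otherwise the claim would already follow from the hypothesis that only a $W'$-good tuple exists. I expect the resolution to rely on a planar-topological argument inside the current face $F$: since $e_L$ and $e_R$ both lie on $\partial F$, an analogous extension of the $W''$-good tuple's paths through $e_R$, combined with a redistribution of the combined demand between the two tuples around $F$, balances the loads on $e_L$ and $e_R$ in the merged $\qset$. Once the capacity and restricting-pair conditions are verified, the reindexed decomposition produces the desired $W$-good tuple, and this final interleaving of the two witnesses is what I expect to be the most delicate part of the argument.
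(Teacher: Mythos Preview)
Your proposal has a genuine gap: you misidentify both the extension mechanism and the real obstruction, and consequently the role played by the $W''$-good tuple.

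First, the extension. A path $R_i$ in an active sequence with endpoint $c=v'_L$ arose by \emph{wall-respecting splitting} of a longer routing path at the wall point $c$. When you pass from $W'$ to $W$, the point $c$ is no longer on the wall, so the correct repair is to \emph{un-split} at $c$: concatenate $R_i$ with the adjacent piece of that same routing path, obtaining a path $R^*_i$ that contains $c$ as an internal vertex and has both endpoints on the $W$-segments. You instead append the wall edge $e_L$, moving the endpoint to $v_L$; but then the corresponding split demand $(c,y_{i+1})$ would have to become $(v_L,y_{i+1})$, and that is not a valid wall-respecting split of any demand in $\cset'$ unless the underlying routing actually passes through $v_L$ --- which it need not. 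So the capacity-on-$e_L$ concern is a red herring: nothing is routed along $e_L$ at all.

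Second, the real obstruction. After extending $R_i$ to $R^*_i$ along the original routing, the paper's Cases~1 and~2 show (via switching and \Cref{clm:self-config}) that you can reduce to the situation where every remaining problematic $R^*_i$ is \emph{non-crossing}: both endpoints lie on the \emph{right} segment, with $R_i$ as its prefix passing through $c$. Such an $R^*_i$ cannot sit in any $W$-good sequence by itself. Symmetrically, the irreducible $L^*_j$ from the $W''$-tuple have both endpoints on the \emph{left} segment and pass through $d$. The key topological fact is that the subpaths $R^*_i[c,y_i]$ and $L^*_j[d,y'_j]$ must intersect (one goes from near the left wall to the right segment, the other from near the right wall to the left segment). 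Switching at that intersection produces two genuine crossing paths $\bar R,\bar L$, which are then spliced into the two sequences to make them $W$-good. Finally, since the $W'$-tuple may have $\alpha$ such irreducible sequences and the $W''$-tuple $\beta$ of them, one takes $\beta$ copies of the former and $\alpha$ copies of the latter so that every irreducible $\rset$ can be paired with an irreducible $\lset$. None of this is about balancing loads on $e_L$ and $e_R$; it is about manufacturing crossing paths out of pairs of non-crossing ones.
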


From the definition of walls and their growing process, in each step we can either an edge to its left wall or its right wall, until the end it becomes a complete wall. Therefore, \Cref{clm: induction} together with the base case implies that for all walls $W$, there is a $W$-good tuple, which completes the proof of \Cref{lem: inevitable}.
The remainder of this section is dedicated to the proof of \Cref{clm: induction}.

Consider first the wall $W'$ obtained from $W$ by growing an edge on its left wall. Let $c$ be the new endpoint.
Let $(\qset',\cset',\bar\cset')$ be the $W'$-good tuple guaranteed by \Cref{clm: induction}, and we additionally assume that it is the one with the minimum number of paths in $\qset'$.
Denote $\qset'=\rset'_0\cup\big(\bigcup_{1\le i \le s}\rset'_i\big)$, where each $\rset'_i$ is a $W'$-good sequence.
\paragraph{Active sequences.}
We say that a sequence $\rset'_i$ is \emph{active} iff $c$ is an endpoint of some of its path. Note that, if $\rset'_i$ is inactive, then the corresponding $W'$-good pair $(\rset'_i,\dset_i)$ is also a $W$-good pair. 
Therefore, we only need to focus on active sequences.
Fix now an active sequence $\rset$. 
Let $R_i$ be the first path in $\rset$ with an endpoint $c$.
Recall that each path in $\rset$ is obtained from wall-respecting splitting the routing of some demand.
Let $R^*_i$ be the path obtained by the concatenation of (i) path $R_i$; and (ii) the $c$-incident subpath next to $R_i$ of the routing path containing $R_i$ (so $R^*_i$ contains $c$ as the only internal $W'$-vertex and no internal $W$-vertices).

\paragraph{Case 1: $R^*_i$ is crossing.}
In this case, we can convert the $W'$-good sequence $\rset$ into a $W$-good sequence by simply replacing $R_i$ with $R^*_i$, with potential switching. 

\paragraph{Case 1.1: $R_i$ from left to right.}
In this case, $R_i$ is a suffix of $R^*_i$.
If $R^*_i\setminus R_i$ intersects $R_{i+1}$, then we switch paths $R^*_i$ and $R_{i+1}$, getting a path satisfying the demand $(x_i,y_{i+1})$ in $\dset$ and another path still containing $R_i$, obtaining a sequence with fewer paths. 
If $R^*_i\setminus R_i$ does not intersect $R_{i+1}$, then $R^*_i$ precedes $R_{i+1}$, so replacing $R_i$ with $R^*_i$ yields a $W$-good sequence.

\paragraph{Case 1.2: $R_i$ from right to left.}
In this case, $R_i$ is a prefix of $R^*_i$, so replacing path $R_i$ with $R^*_i$ is essentially moving down the endpoint of $R_i$ on the left segment. From \Cref{clm:self-config}, we can switch to obtain a $W$-good sequence.

\paragraph{Case 2: $R^*_i$ is non-crossing, and $R_i$ is a suffix of $R^*_i$.}
In this case, $R^*_i$ both starts and ends on the right segment. We will reduce it to the case where $R^*_i$ is non-crossing, and $R_i$ is a prefix of $R^*_i$.
Denote by $y_i$ the other endpoint of $R_i$, so $R_i$ connects $c$ to $y_i$. 

\paragraph{Case 2.1: $R^*\setminus R_i$ belongs to some $W'$-good sequence.} Denote the sequence by $\rset'=(R'_1,\ldots)$, where for each $t$, $R'_t$ connects $x'_t$ to $y'_t$, and $y'_j=c$, namely $R'_j=R^*\setminus R_i$.
So in the associated $\dset'$ for $\rset'$, there is a demand $(x'_{j-1},c)$. Since in the associated $\dset$ for $\rset$, there is a demand $(c,y_{i+1})$.
So the demands $(x'_{j-1},c)$ and $(c,y_{i+1})$ can be combined as one, i.e., $(x'_{j-1},y_{i+1})$. 
We reorganize the sequences $\rset,\rset'$ into two new sequences as follows.
\begin{itemize}
\item the first sequence is $(R_1,\dots,R_{i-1}, R^*_i, R'_{j+1},\ldots)$; and 
\item the second sequence is $(R'_1,\dots,R'_{j-1},R_{i+1},\ldots)$.
\end{itemize}
It is easy to verify that the associated demands for these two sequences correspond to the union of $\dset$ and $\dset'$ with $(x'_{j-1},c)$ and $(c,y_{i+1})$ combined into $(x'_{j-1},y_{i+1})$. 
So if these two sequence are both $W$-good, then we essentially decrease the number of paths in all good sequences.
 
For the first sequence:
\begin{itemize}
\item If $x'_j$ precedes $y_i$, we undo the combination of demands, leaving $R_i$ (instead of $R^*_i$) in the first sequence and making $R^*_i\setminus R_i$ satisfy the corresponding demand $(x'_j,c)$. Clearly, the sequence is $W$-good up to $R_i$. Consider now the path $R'_{j+1}$ which connects $x'_{j+1}$ to $y'_{j+1}$. 
\begin{itemize}
\item If $x'_{j+1}$ lies on the right segment, then $y'_{j+1}$ is on the left segment and is preceded by $c$, which means $R_i$ is above $(x'_{j+1},y'_{j+1})$ and the whole sequence is good.
\item If $x'_{j+1}$ lies on the left segment, then $y'_{j+1}$ is on the right segment and is preceded by $x'_j$.
Note that $c$ precedes $x'_{j+1}$. If $y_i$ precedes $y'_{j+1}$, then $R_i$ precedes $R'_{j+1}$ and so the whole sequence is $W$-good. If $y'_{j+1}$ precedes $y_i$, paths $R_i$ and $R'_{j+1}$ intersect.
We then switch at the intersection, obtaining a path satisfying the demand $(c,y'_{j+1})$ and another path from $x'_{j+1}$ to $y_i$. Compared with the path $R'_{j+1}$ from $x'_{j+1}$ to $y'_{j+1}$, we essentially move down its endpoint $y'_{j+1}$ to $y_i$. From \Cref{clm:self-config}, we can continue switching until the whole sequence becomes $W$-good.
\end{itemize}  
\item If $y_i$ precedes $x'_j$, we undo the combination of demands, leaving $R^*_i\setminus R_i$ (instead of $R^*_i$) in the first sequence and making$R_i$ satisfy the corresponding demand $(c,y_i)$. Clearly, the sequence from $R^*_i\setminus R_i$ to the end is $W$-good. We only need to additionally ensure that $R_j$ (from $x_{i-1}$ to $y_{i-1}$) precedes $R^*_i\setminus R_i$.
\begin{itemize}
\item If $x_{i-1}$ lies on the left segment, then $x_{i-1}$ precedes $c$, and $R_{i-1}=(x_{i-1},y_{i-1})$ is a corridor edge and therefore precedes $(x'_j,c)$. 
\item If $x_{i-1}$ lies on the right segment, then it precedes $y_i$. Now if $y_{i-1}$ precedes $c$, then again $R_{i-1}=(x_{i-1},y_{i-1})$ is a corridor edge and therefore precedes $R^*_i\setminus R_i$. Otherwise, $R_{i-1}$ intersects $R^*_i\setminus R_i$. We switch at the intersection, obtaining a path satisfying the demand $(x_{i-1},c)$ and another path from $x'_j$ to $y_{i-1}$. Compared with $R^*_i\setminus R_i$ which is from $x'_j$ to $c$, we essentially move the endpoint $c$ down to $y_{i-1}$. From \Cref{clm:self-config}, we can continue switching until the whole sequence becomes $W$-good.
\end{itemize} 
\end{itemize}
    
For the second sequence, consider the path $R'_{j-1}$ which connects $x'_{j-1}$ to $y'_{j-1}$. Since it precedes $R'_j$ which connects $x'_j$ to $c$, whichever endpoint of $R'_{j-1}$ lying on the left segment has to be precedes $c$, so $R'_{j-1}=(x'_{j-1},y'_{j-1})$ is an entry edge and it precedes $(x_{i+1},y_{i+1})$.

\paragraph{Case 2.2: $R^*\setminus R_i$ does not belong to any $W'$-good sequence.}
Assume that $R^*\setminus R_i$ connects $x_i$ to $c$. The current sequence is $\rset=(R_1,\ldots,R_r)$, where $R_t$ connects $x_t$ to $y_t$ for all $t\ne i$, and $R_i$ connects $c$ to $y_i$. The goal is to replace $R_i$ in the sequence with $R^*\setminus R_i$ and obtain a $W$-good sequence, with potential switching needed.

Consider first the path $R_{i-1}$, which connects $x_{i-1}$ to $y_{i-1}$ and precedes $R_i$. 
\begin{itemize}
\item If $x_{i-1}$ lies on the left segment, then it precedes $c$, so $R_{i-1}=(x_{i-1},y_{i-1})$ is a corridor edge therefore precedes $R^*\setminus R_i$.
\item If $x_{i-1}$ lies on the right segment, then it precedes $y_i$. In this case,
\begin{itemize}
\item if $y_{i-1}$ precedes $c$, then $R_{i-1}=(x_{i-1},y_{i-1})$ is a corridor edge therefore precedes $R^*\setminus R_i$;
\item if $c$ precedes $y_{i-1}$, then $R_{i-1}$ intersects $R_i$, and we switch the intersection between paths $R^*_i$ and $R_{i-1}$, obtaining a $x_i$-$c$-$y_{i-1}$ path and a $x_{i-1}$-$y_i$ path, with the latter satisfying the demand $(x_{i-1},y_i)$ in the associated $\dset$. See \Cref{fig: switch1} for an illustration.
\end{itemize}
\end{itemize}    
Therefore, we either make sure that all previous paths in $\rset$ are single corridor edges and therefore precede $R^*\setminus R_i$, or have at hand a $x_i$-$c$-$y_{i-1}$ path and continue with same arguments to path $R_{i-2}$, and then $R_{i-3}$ if needed, eventually ensuring that all previous paths in $\rset$ precede the $x_i$-$c$-$y_{*}$ path.


Consider now the path $R_{i+1}$ which connects $x_{i+1}$ to $y_{i+1}$ and is preceded by $R_i$. Note that either $x_{i+1}$ or $y_{i+1}$ is preceded by $c$, so either $R_{i+1}$ is preceded by $R^*\setminus R_i$, in which case we are done, or it intersects $R^*\setminus R_i$, in which case we switch the intersection, obtaining a $x_{i+1}$-$c$-$y_i$ path and another $x_i$-$y_{i+1}$ path satisfying the corresponding demand in $\dset$. 
We can continue with similar arguments to path $R_{i+2}, R_{i+3},\ldots$ if needed, and eventually ensuring that the $x_*$-$c$-$y_{i+1}$ path precedes all the remaining paths.


\begin{figure}[h!]
	\centering
	\subfigure[Switching paths $R^*_i$ and $R_{i-1}$ at their intersection (red), obtaining an $x_i$-$c$-$y_{i-1}$ path (black dashed line) and an $x_{i-1}$-$y_i$ path.]
	{\scalebox{0.14}{\includegraphics{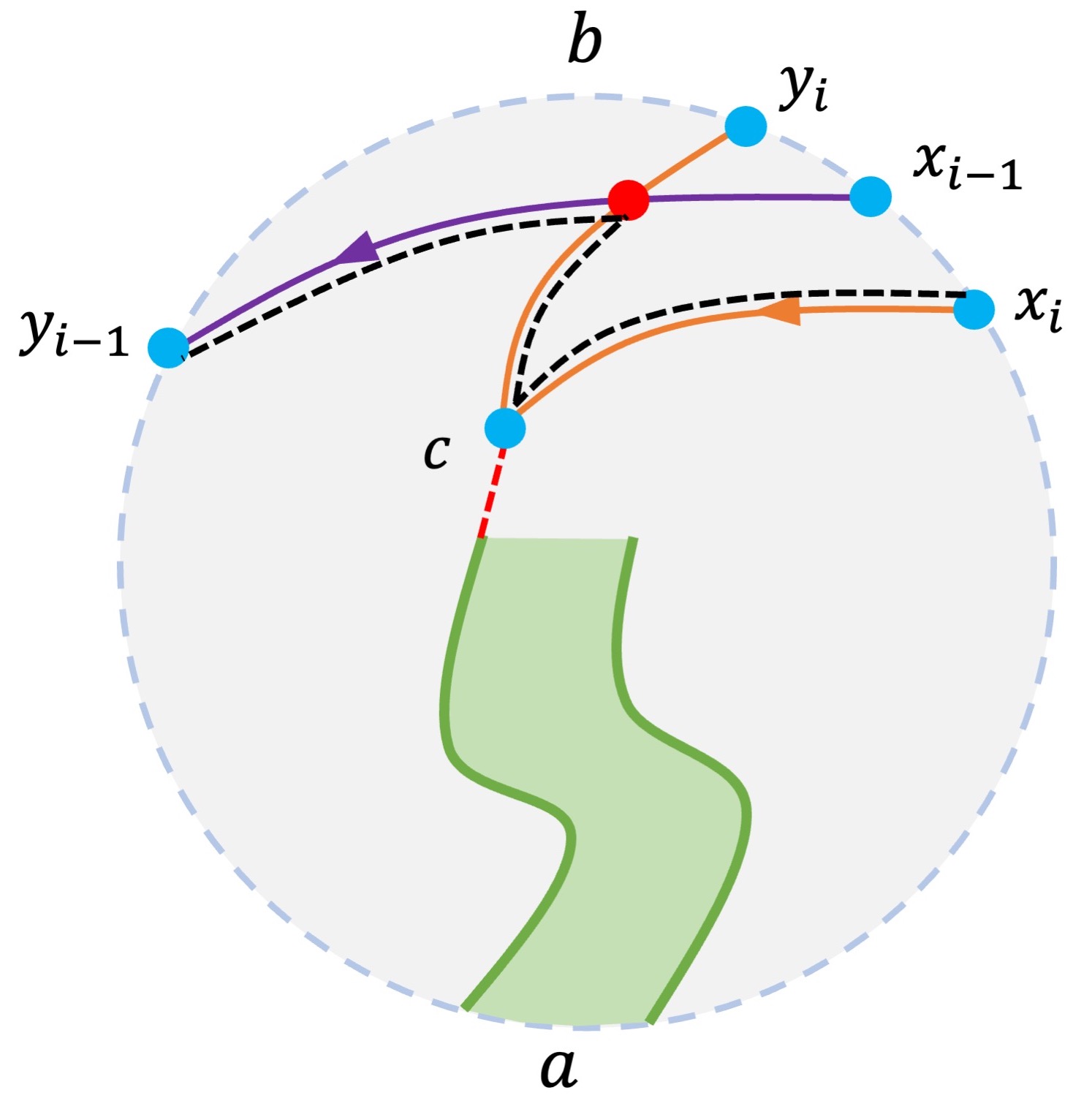}\label{fig: switch1}}}
	\hspace{1.4cm}
	\subfigure[Switching $R^*_i$ and $L^*_j$ at their intersection (red), obtaining an $x_i$-$c$-$y'_j$ path and an $x'_j$-$d$-$y_i$ path.]
	{
		\scalebox{0.14}{\includegraphics{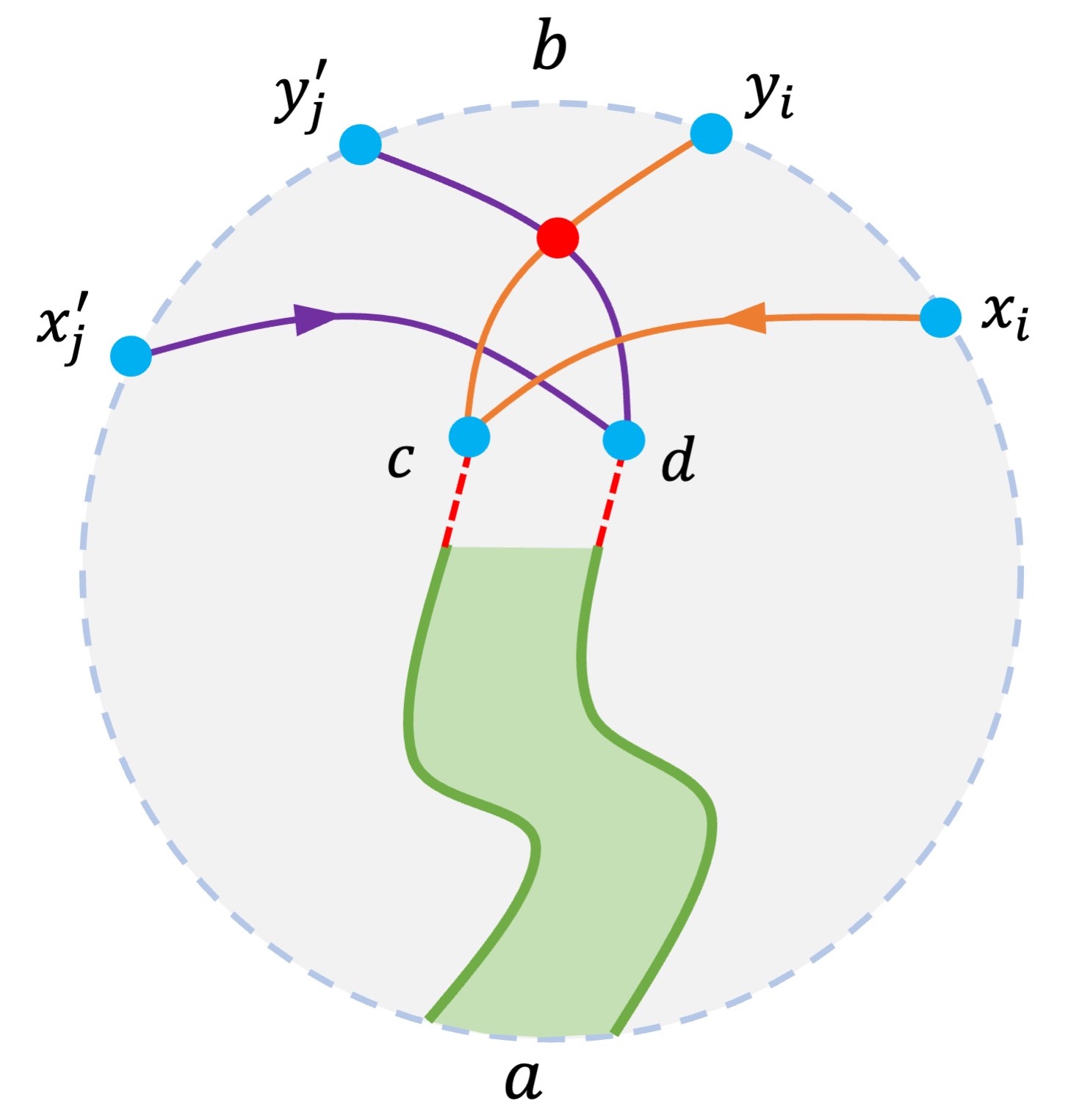}\label{fig: switch2}}}
	\caption{An illustration of switching in Case 2.2 and Case 3.\label{fig: switch}}
\end{figure}

\paragraph{Case 3: The remaining possibility.}
The above two cases dealt with active sequences in the $W'$-good tuple $(\qset',\cset',\bar\cset')$, ensuring that in the decomposition of $\qset'$, in each active $W'$-good sequence $\rset$, the first path $R_i$ containing $c$ as an endpoint satisfy that $R^*_i$ is non-crossing, starts and ends on the right segment, and contains $R_i$ as a prefix. Similar arguments apply to the $W''$-good tuple $(\qset'',\cset'',\bar\cset'')$. We can reduce to the case where in the decomposition of $\qset''$, in each active $W''$-good sequence $\lset$, the first path $L_j$ containing the right wall frontier point $d$ as an endpoint satisfy that $L^*_j$ is non-crossing, starts and ends on the left segment, and contains $L_j$ as a prefix.

Consider now a sequence $\rset=(R_1,\ldots,R_r)$ and a sequence $\lset=(L_1,\ldots,L_\ell)$, where $R_i$ connects $x_i$ to $y_i$ and $L_j$ connects $x'_j$ to $y'_j$.
Let $c,R_i,R^*_i$ and $d,L_j,L^*_j$ be as defined above, so $R^*_i$ starts from $x_i$, passes through $c$ and ends at $y_i$, and $L^*_j$ starts from $x'_j$, passes through $d$ and ends at $y'_j$.

Clearly, subpaths $R^*_i[c,y_i]$ and $L^*_j[d,y'_j]$ intersect. We switch paths $R^*_i$ and $L^*_j$ at this intersection, obtaining an $x_i$-$c$-$y'_j$ path, which we denote by $\bar R$, and an $x'_j$-$d$-$y_i$ path, which we denote by $\bar L$. See 
\Cref{fig: switch2} for an illustration.
We now reorganize two sequences as follows:
\begin{itemize}
\item the first sequence is $(R_1,\ldots,R_{i-1},\bar L, L_{j+1},\ldots,L_{\ell})$; and 
\item the second sequence is $(L_1,\ldots,L_{j-1},\bar R, R_{i+1},\ldots,R_{r})$.
\end{itemize}

We now use \Cref{clm:self-config} to convert them into $W$-good sequences, as follows. Consider the first sequence. Note that $R_{i-1}$ (with endpoints $x_{i-1},y_{i-1}$) precedes $R_i$ (with endpoints $x_i,c$), so either $x_{i-1}$ or $y_{i-1}$ precedes $c$, so $R_{i-1}$ has to be a corridor edge and therefore precedes $\bar L$. On the other hand, note that, compared with path $L_j$ (with endpoints $x'_j,d$), the path $\bar L$ only moves the ending point from $d$ down to $y_i$, so from \Cref{clm:self-config}, we can switch to obtain a $W$-good sequence.
The second sequence can be made $W$-good in the same way.

Finally, we construct a $W$-good tuple as follows.
Let $\alpha$ be the number of such active sequences $\rset$ in $\qset'$ and $\beta$ be the number of such active sequences $\lset$ in $\qset''$.
We make $\beta$ copies of the collection $\qset'$ and $\alpha$ copies of the collection $\qset''$, and then pair each active sequence $\rset$ in these copies of $\qset'$ with a distinct active sequence $\lset$ in these copies of $\qset''$, and convert them into $W$-good sequences as above.
After all the active sequences are converted, the resulting collection is a $W$-good tuple.

\section{Reducing the size to $O(|T|^6)$}

In the previous sections, we proved that Monge property is a sufficient and necessary condition for the input quasi-metric $D$ to be realizable by a directed Okamura-Seymour instance. In this section, we show that it is even realizable by a directed Okamura-Seymour instance on $O(|T|^6)$ vertices.

Denote $k=|T|$. Recall that the Okamura-Seymour instance we use to realize the input quasi-metric is a nest $G$, a directed graph consisting of a collection $\Pi$ of edge-disjoint paths connecting pairs of terminals. We perform iterations to simplify $G$, ensuring that $G$ is a nest throughout, and in the end reducing its size to $k^6$.
In each iteration, we check whether or not there exists a pair of paths in $\Pi$ that satisfies the following conditions, and simplify $G$ accordingly.

\paragraph{Case 1. paths $P,P'$ intersect twice in the same direction.} That is, there exist a pair $x_1,x_2$ of vertices, in $G$, such that $x_1,x_2\in V(P),V(P')$, and $x_1$ appear before $x_2$ on both $P,P'$.

In this case, we exchange the subpaths of $P,P'$ between $x_1,x_2$, and then eliminate vertices $x_1,x_2$.
Edge lengths are modified accordingly.
See \Cref{fig: nudge} for an illustration.
It is easy to verify that the distance between every (ordered) pair of terminals is preserved, and so paths in $\Pi$ are still their shortest paths.
After this iteration, the number of vertices in $G$ decreases by $2$.

\begin{figure}[h!]
	\centering
	\includegraphics[scale=0.13]{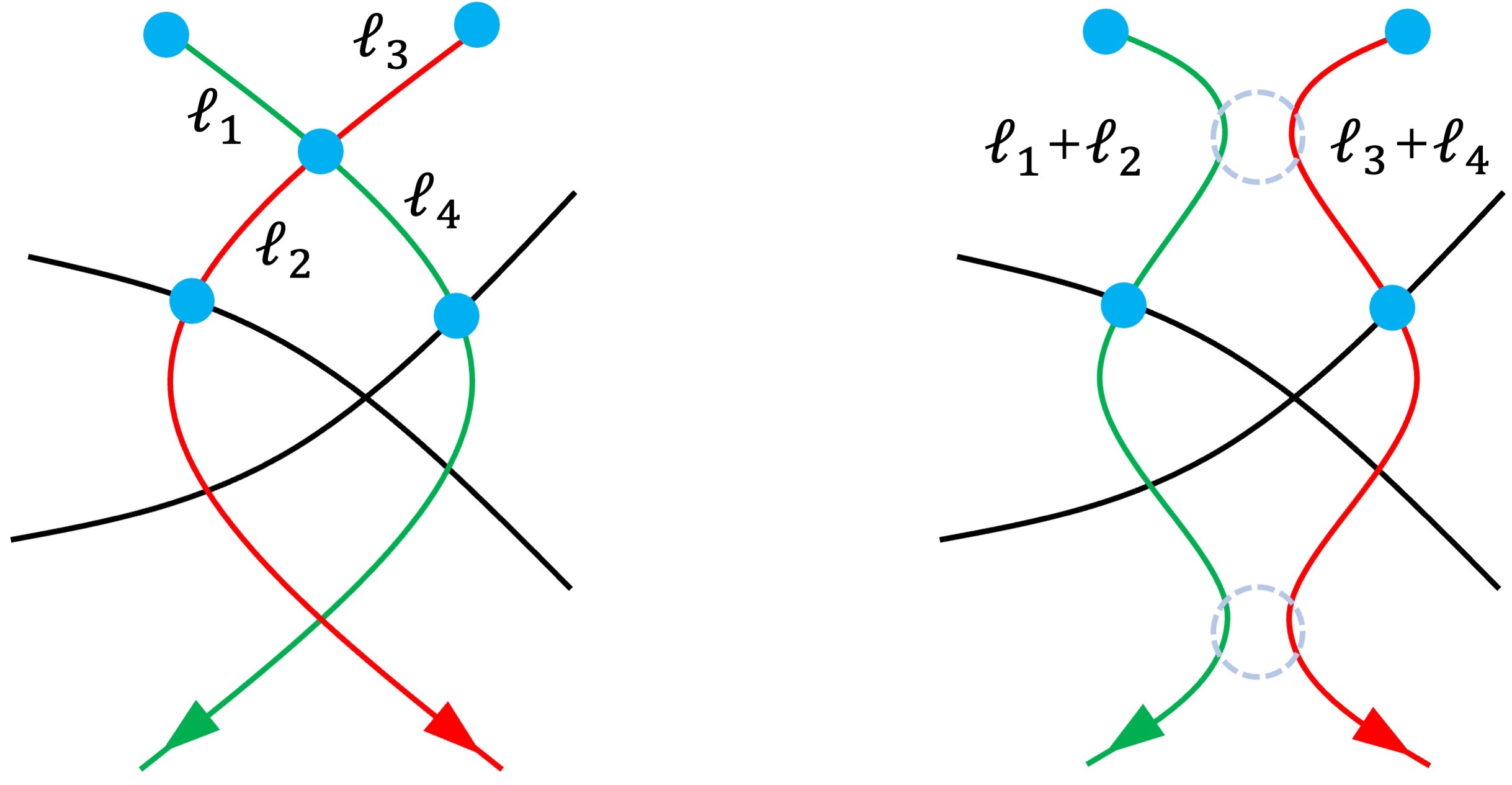}
	\caption{Exchanging the subpaths in Case 1. Edge lengths are modified accordingly.
		\label{fig: nudge}} 
\end{figure}

\paragraph{Case 2. paths $P,P'$ surrounds a $\Theta$-free area.} That is, there exist a pair $x_1,x_2\in V(P),V(P')$, such that 
\begin{itemize}
\item $x_1$ appear before $x_2$ on $P$;
\item $x_2$ appear before $x_1$ on $P'$; and
\item there is no path intersecting both $P[x_1,x_2]$ and $P'[x_2,x_1]$ (in other words, no path enters the area surrounded by $P[x_1,x_2]$ and $P'[x_2,x_1]$ from one side and exits from the other side).
\end{itemize}

In this case, we first nudge all subpaths in the area to both side until they are out of it, and then we exchange the subpaths of $P,P'$ between $x_1,x_2$, and then eliminate vertices $x_1,x_2$. Edge lengths are modified accordingly.
See \Cref{fig: nudge_2} for an illustration.
It is easy to verify that the shortest paths in $\Pi$ and their distances are preserved.
After this iteration, the number of vertices in $G$ decreases by at least $2$.

\begin{figure}[h!]
	\centering
	\includegraphics[scale=0.13]{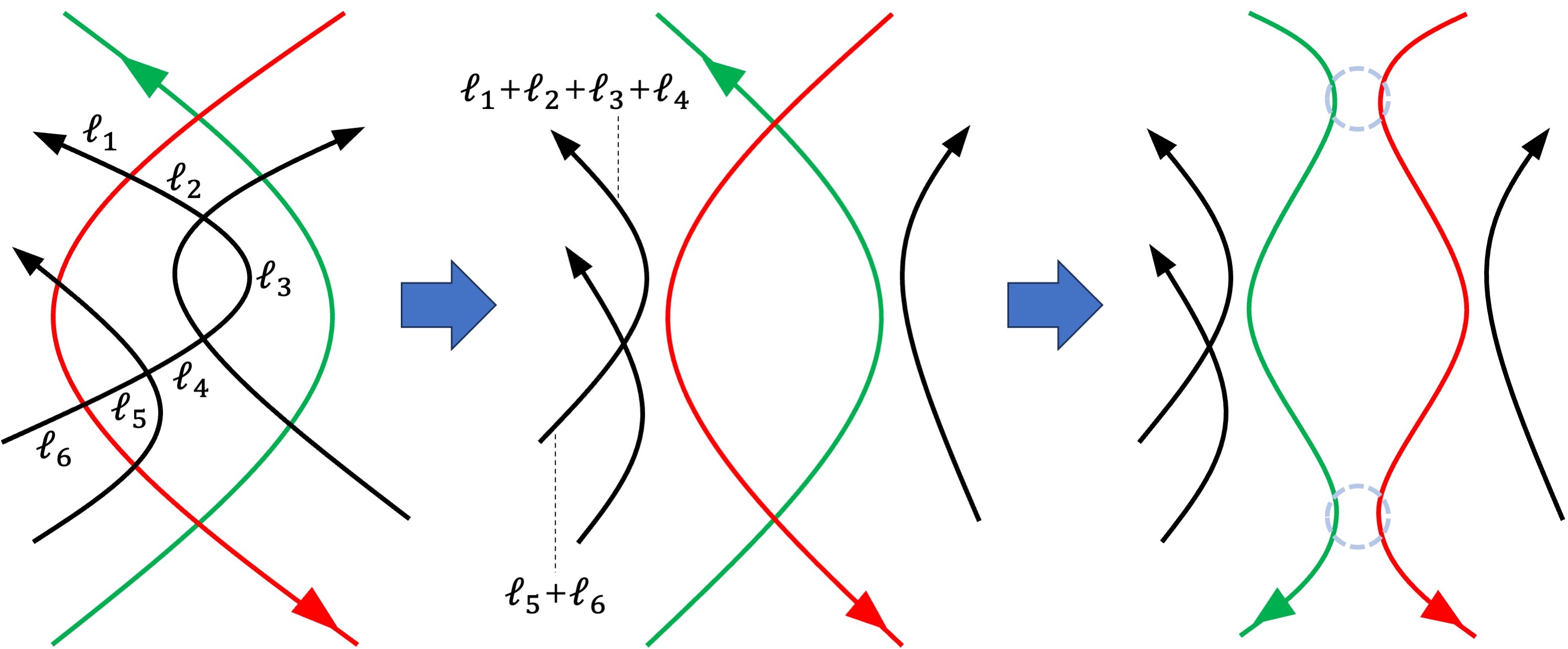}
	\caption{Nudging and exchanging subpaths in Case 2. Edge lengths are modified accordingly.
		\label{fig: nudge_2}} 
\end{figure}

Clearly, after simplification step in either Case 1 or Case 2, $G$ is still a nest.
We keep performing iterations until there are no pairs of paths in $\Pi$ satisfying the conditions of Case 1 or Case 2. We show that $G$ contains at most $O(k^6)$ vertices, which follows from the next claim (since there are $k^2$ paths in $\Pi$, $|V(G)|\le \binom{k^2}{2}\cdot O(k^2)=O(k^6)$).

\begin{claim}
In the end, every pair of paths in $\Pi$ share at most $O(k^2)$ vertices.
\end{claim}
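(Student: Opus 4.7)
The plan is to combine the two simplification rules into a structural argument. After Case 1 is exhausted, if two paths $P, P' \in \Pi$ share internal vertices $x_1, \ldots, x_m$ ordered along $P$, then they must occur on $P'$ in the fully reversed order $x_m, \ldots, x_1$; otherwise some pair $(x_s, x_t)$ would lie in the same order on both paths, immediately activating Case 1. Consequently $P \cup P'$ carves out a chain of $m-1$ ``bigons'' $B_1, \ldots, B_{m-1}$, where $B_i$ is the region bounded by $P[x_i, x_{i+1}]$ and $P'[x_{i+1}, x_i]$. The task thus reduces to proving $m-1 = O(k^2)$.

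Next I would invoke that Case 2 no longer applies: for each bigon $B_i$ there exists a \emph{witness} path in $\Pi \setminus \{P,P'\}$ that enters $B_i$ through one of its two bounding arcs and exits through the other. Fixing such a witnessing subcurve per bigon, the witness crosses $P$ at some point $p_i \in P[x_i, x_{i+1}]$ and $P'$ at some point $q_i \in P'[x_{i+1}, x_i]$, and these two intersections are consecutive along the witness (nothing else in $P \cup P'$ lies between them on the witness, since the witnessing segment stays inside $B_i$). Because $|\Pi| = O(k^2)$, it suffices to show that every fixed third path $P'' \in \Pi$ can witness at most a constant number of bigons of $(P, P')$.

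The heart of the argument, and the step I expect to be the main obstacle, is showing $P''$ witnesses at most two bigons of $(P, P')$; I would argue by contradiction. Suppose $P''$ witnesses $B_i, B_j, B_k$ with $i < j < k$. Along $P$ the crossings $p_i, p_j, p_k$ appear in this order, so the no-Case-1 condition applied to the pair $(P, P'')$ forces them to appear on $P''$ in the reversed order $p_k, p_j, p_i$. Along $P'$ the bigons occur in the reverse order $B_k, B_j, B_i$, so $q_k, q_j, q_i$ appear on $P'$ in this order, and no-Case-1 applied to $(P', P'')$ puts them on $P''$ in the order $q_i, q_j, q_k$. However, each witnessing pair $(p_\ell, q_\ell)$ forms a consecutive block on $P''$, so the three blocks are pairwise disjoint contiguous intervals of $P''$ and their order on $P''$ simultaneously determines the $p$-order and the $q$-order of the three bigons. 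The derived $p$-order forces the block arrangement $(B_k, B_j, B_i)$, while the derived $q$-order forces $(B_i, B_j, B_k)$, a contradiction. Hence $m - 1 \le 2 |\Pi| = O(k^2)$, proving the claim; the global $O(k^6)$ vertex bound then follows by summing shared-vertex counts over the $O(k^4)$ unordered pairs in $\Pi$.
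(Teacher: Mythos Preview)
Your argument is correct and follows essentially the same route as the paper: exhaust Case~1 to force the reversed ordering of the shared vertices along $P'$, use Case~2 exhaustion to produce a witness path for each bigon, and then bound the number of bigons any fixed $P''\in\Pi$ can witness by invoking Case~1 on the pairs $(P,P'')$ and $(P',P'')$. The only cosmetic difference is that the paper obtains the constant-per-path bound via a two-bigon ``same direction'' contradiction (a path cannot cross two bigons both from the $P$-side to the $P'$-side, else it creates two same-order intersections with $P$ or with $P'$), whereas you use a three-bigon block-ordering contradiction; both yield the same $O(k^2)$ bound.
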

\begin{proof}
Consider a pair $P,P'$ of paths, and let $x_1,x_2,\ldots,x_r$ be the vertices in $V(P)\cap V(P')$, that appear on $P$ in this order. Since the paths $P,P'$ do not satisfy the condition of Case 1, the order in which that they appear on $P'$ must be $x_r,x_{r-1},\ldots,x_1$.
For each index $1\le i\le r$, since the area surrounded by $P[x_i,x_{i+1}]$ and $P'[x_{i+1},x_i]$ is not $\Theta$-free, there is some path $Q_i\in \Pi$ that enters the area from one side and exits from the other side. Note that $Q_i$ may not enter and exit in the same way at any other area, since otherwise it will create two same-direction intersections with either $P$ or $P'$. Therefore, each area will lock some path in $\Pi$ for itself, so $r\le O(k^2)$.
\end{proof}

\section{Finding the Circular Ordering: Proof of \Cref{cor: algorithm}}

In this section, we give an efficient algorithm for determining if the input quasi-metric $D$ is realizable by some directed Okamura-Seymour instance. Specifically, the algorithm will decide if there exists a circular ordering $\sigma$, such that $D$ satisfies Monge property with respect to $\sigma$, and computes one if it exists.

First, some intuitions of our algorithm. Imagine that we have somehow figured out that  in the true ordering $\sigma$, terminals $a,b$ appear consecutively. We can use this knowledge to discover the locations of other terminals in $\sigma$ as follows. Consider a pair $c,d\in T$, and we compute $D(a,c)-D(b,c)$ and $D(a,d)-D(b,d)$. If $D(a,c)-D(b,c)>D(a,d)-D(b,d)$, then by simply reorganizing terms we get
$D(a,c)+D(b,d)>D(a,d)+D(b,c)$, and so $(a,d)$ and $(b,c)$ cannot cross due to Monge property, and so $a,b,c,d$ appear in this order. 
Similarly, if $D(a,c)-D(b,c)<D(a,d)-D(b,d)$, then we can derive that $a,b,d,c$ must appear in this order. Therefore, we calculate $D(a,t)-D(b,t)$ for all other terminals $t$, and if all these values are distinct, we can immediately find the true ordering $\sigma$.

The hard case is when many of the values $\set{D(a,t)-D(b,t)}_{t\in T}$ are the same, and we are incapable of determining the relative ordering on these terminals. However, as long as not all these values are equal, we have still made progress by obtaining a refined ordering, for example we may now know that $a,b,c,d$ must appear in this order, and between $d$ and $a$ there is another group of terminals. Then we can rely on $c,d$ and compute the values $\set{D(c,t)-D(d,t)}$ for terminals $t$ in this group to further refine its ordering. 
If at some point we cannot refine the ordering anymore, we will recurse on each group, try figuring out another consecutive pair (like the pair  $(a,b)$ we start with), and use this additional knowledge to further refine the ordering.



We now describe the algorithm in detail. We pick an arbitrary pair $a,b$ of terminals, and run tests on them. The goal is to decide if they can appear consecutively on the boundary.
Throughout, we maintain a circularly ordered partition $\Sigma$ of the terminals, starting with $\Sigma=(\set{b},T\setminus \set{a,b},\set{a})$ (appearing clockwise in this order), and we will iteratively refine it. The sets in $\Sigma$ are called \emph{groups}, so initially there are three groups in $\Sigma$.

%
%
We will ensure that, at any point, for every four terminals $t_1,t_2,t_3,t_4$ such that (i) they belong to different groups in the current $\Sigma$; and (ii) their corresponding groups $S_1,S_2,S_3,S_4$ appear in $\Sigma$ in this order, all Monge property inequalities on them are satisfied. After every iteration, we perform this test for all tuple of terminals, if some tuple does not satisfy Monge property, we terminate the algorithm and report \textsf{Fail}.

The tests will concern $3$ groups and $4$ terminals, where one group contains two terminals, denoted by $t_1,t_2$, and others groups are singletons, denoted by $\set{c}$ and $\set{d}$. 
Assume without loss of generality that $\set{c},\set{d}$, and the group containing $t_1,t_2$ appear clockwise in this order.
The goal of this test is to try to decide the clockwise order on the four terminals, which is either $c,d,t_1,t_2$ or $c,d,t_2,t_1$. 

In all tests below, the criterion will imply that $c,d,t_2,t_1$ is the order. In other words, if we cut the circular ordering between $a$ and $b$ and obtain a linear ordering from $a$ to $b$, then $t_1$ appear before $t_2$ in this ordering (i.e., $t_1$ is closer to $a$ than $t_2$).
We also write $t_1 \prec t_2$.
In particular, the criterion means that $t_1,t_2$ should no longer be in the same group, and the old group containing them, which we denote by $S$, needs to be refined.

\subsubsection*{Test 1: $D(c,t_1)+D(d,t_2)<D(c,t_2)+D(d,t_1)$.} 
Reorganizing terms, $D(c,t_1)-D(d,t_1)<D(c,t_2)-D(d,t_2)$. For each terminal $t$ in $S$, we compute $D(c,t)-D(d,t)$, and then we sort the terminals in the increasing order of this value and use it to refine $S$ and obtain an ordered partition $(S_1,\ldots,S_r)$ of $S$. Namely, for every pair $t,t'$ of terminals in $S$,
\begin{itemize}
\item $t\in S_i, t\in S_j$ for indices $i<j$, iff $D(c,t)-D(d,t)<D(c,t')-D(d,t')$;
\item $t,t'$ lie in the same set $S_i$, iff $D(c,t)-D(d,t)=D(c,t)-D(d,t')$.
\end{itemize}
We then replace $S$ with the refined ordered partition $(S_1,\ldots,S_r)$ in $\Sigma$. This completes the refinement of $\Sigma$ in this test. Note that, if $t\in S_i, t\in S_j$ for indices $i<j$, then from  Monge property we can indeed derive that $t \prec t'$.

\subsubsection*{Test 2: $D(t_1,c)+D(t_2,d)<D(t_2,c)+D(t_1,d)$.}
Similar to Test 1, we can refine the group $S$ based on the values $\set{D(t,c)-D(t,d)}_{t\in S}$.

\subsubsection*{Test 3: $D(c,t_1)+D(t_2,d)<D(c,d)+D(t_2,t_1)$.}
In this case, we refine the group $S$ into three groups $(S_1,S',S_2)$, as follows. 
For every $t\in S$, 
\begin{itemize}
\item if $D(c,t_1)+D(t_2,t) \ge D(c,t)+D(t_2,t_1)$, then we put $t$ in $S_1$;
\item if $D(t_2,d)+D(t,t_1) \ge D(t_2,t_1)+D(t,d)$, then we put $t$ in $S_2$;
\item if none of above two inequalities holds, then we put $t$ in $S'$.
\end{itemize}
\begin{claim}
For every $t\in S_1$, $t'\in S'$ and $t''\in S_2$, $t \prec t' \prec t''$ must hold.
\end{claim}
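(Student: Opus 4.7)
The plan is to show that the defining inequalities of $S_1$, $S'$, and $S_2$ partition $S$ into the three positional buckets $\{t : t \prec t_1\}$, $\{t : t_1 \prec t \prec t_2\}$, and $\{t : t_2 \prec t\}$, respectively. Once these three containments are verified, the chain $t \prec t_1 \prec t' \prec t_2 \prec t''$ for $t \in S_1$, $t' \in S'$, $t'' \in S_2$ is immediate from the already-established fact $t_1 \prec t_2$, which itself follows from Monge applied to the cyclic clockwise quadruple $(t_2, c, d, t_1)$ combined with Test~3's strict inequality $D(c, t_1) + D(t_2, d) < D(c, d) + D(t_2, t_1)$.

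For the $S_1$ containment I will assume $t$ satisfies the $S_1$ inequality and suppose for contradiction that $t_1 \prec t$, splitting into two subcases. If $t_1 \prec t \prec t_2$, then the cyclic clockwise order of $\{c, t_2, t, t_1\}$ is $(c, t_2, t, t_1)$, and Monge yields $D(c, t) + D(t_2, t_1) \geq D(c, t_1) + D(t_2, t)$, directly contradicting the $S_1$ inequality. If instead $t_2 \prec t$, then the cyclic clockwise order of $\{t_2, c, d, t\}$ is $(t_2, c, d, t)$, so Monge gives $D(t_2, d) + D(c, t) \geq D(t_2, t) + D(c, d)$; substituting the resulting upper bound on $D(t_2, t)$ into the $S_1$ inequality produces $D(c, t_1) + D(t_2, d) \geq D(c, d) + D(t_2, t_1)$, contradicting Test~3.

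The $S_2$ containment is proved symmetrically: assuming the $S_2$ inequality with $t_1 \prec t \prec t_2$, Monge on the cyclic clockwise quadruple $(t_2, t, t_1, d)$ yields the reverse of the $S_2$ inequality; assuming $t \prec t_1$, Monge on $(t, c, d, t_1)$ gives $D(t, d) + D(c, t_1) \geq D(t, t_1) + D(c, d)$, which combined with the $S_2$ inequality produces $D(t_2, d) + D(c, t_1) \geq D(t_2, t_1) + D(c, d)$, again contradicting Test~3. For the $S'$ containment I will note the complementary observation: if $t_2 \prec t$, then Monge on $(t, t_2, t_1, d)$ automatically yields the $S_2$ inequality, and if $t \prec t_1$, then Monge on $(c, t_2, t_1, t)$ automatically yields the $S_1$ inequality, so any $t$ satisfying neither inequality must lie strictly between $t_1$ and $t_2$.

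The main obstacle is the directed, asymmetric nature of the Monge inequalities: each four-tuple must appear in clockwise cyclic order for its Monge inequality to apply, and the directions of the distance terms matter. The craft lies in pairing each subcase with precisely the right cyclic clockwise quadruple. Two patterns emerge---a \emph{direct reversal} where the Monge inequality is literally the reverse of the target criterion, and a \emph{Test~3 chaining} argument where Monge plus Test~3's strict inequality combine to reproduce Test~3 with the opposite sign. Once the correspondence between subcases and cyclic quadruples is pinned down, the algebraic manipulation is routine.
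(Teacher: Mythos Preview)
Your proof has a genuine gap in the $S_1$ containment (and symmetrically in $S_2$). In the subcase $t_1 \prec t \prec t_2$, you claim that Monge on $(c, t_2, t, t_1)$ ``directly contradict[s] the $S_1$ inequality.'' But the $S_1$ membership condition is the \emph{non-strict} inequality $D(c,t_1)+D(t_2,t) \ge D(c,t)+D(t_2,t_1)$, and Monge gives the reverse non-strict inequality. Together they force only equality, not a contradiction. So a terminal $t$ with $t_1 \prec t \prec t_2$ and $D(c,t_1)+D(t_2,t) = D(c,t)+D(t_2,t_1)$ legitimately lands in $S_1$, and your intermediate containment $S_1 \subseteq \{t : t \preceq t_1\}$ is simply false. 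The same defect appears in your $S_2$ argument: your ``direct reversal'' pattern never produces a strict contradiction.

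The paper sidesteps this by never routing the comparison through $t_1$. Instead it compares $t \in S_1$ and $t' \in S'$ directly: from $t \in S_1$ one gets $D(c,t)-D(t_2,t) \le D(c,t_1)-D(t_2,t_1)$, while the \emph{strict} failure of the $S_1$ condition for $t'$ gives $D(c,t_1)-D(t_2,t_1) < D(c,t')-D(t_2,t')$. Chaining these yields the strict inequality $D(c,t)+D(t_2,t') < D(c,t')+D(t_2,t)$, so $(c,t)$ and $(t_2,t')$ cannot cross; one more application of the strict Test~3 inequality shows $(c,t)$ and $(t_2,d)$ cannot cross either, which pins down $t \prec t_2$ and then $t \prec t'$. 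The strictness that turns an inequality into an honest contradiction comes from $t' \in S'$, not from $t \in S_1$---your approach tries to extract it from the wrong place. Your $S'$ containment and your ``Test~3 chaining'' subcases are correct; the defect is isolated to the ``direct reversal'' subcases, but that is exactly where the main work lies.
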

\begin{proof}
We prove $t \prec t'$; the proof of $t' \prec t''$ is symmetric. 
Since $t\in S_1$ and $t'\in S'$, by definition,
$$D(c,t_1)+D(t_2,t) \ge D(c,t)+D(t_2,t_1), \quad\text{and}\quad D(c,t_1)+D(t_2,t') < D(t_2,t_1)+D(c,t').$$ Combined together, $$D(c,t)-D(t_2,t) \le D(c,t_1)-D(t_2,t_1) < D(c,t')-D(t_2,t'),$$ which means $D(c,t)+D(t_2,t') < D(c,t')+D(t_2,t)$. By Monge property, the pairs $(c,t)$ and $(t_2,t')$ cannot cross. On the other hand, since $D(c,t_1)+D(t_2,d)<D(c,d)+D(t_2,t_1)$, $$D(c,t)-D(t_2,t) \le D(c,t_1)-D(t_2,t_1) < D(c,d)-D(t_2,d),$$ so pairs $(c,t)$ and $(t_2,d)$ cannot cross. This means that $c,d,t_2,t$ appear in this order, and so $t \prec t_2$. Finally, since $t_1,t_2,t,t'$ lie in a group not containing $c$, and $(c,t),(t_2,t')$ cannot cross, $t \prec t'$.
\end{proof}

\subsubsection*{Test 4: $D(t_1,c)+D(d,t_2)<D(d,c)+D(t_1,t_2)$.}
Similar to Test 3, we refine the group $S$ into three groups $(S_1,S',S_2)$ by checking whether or not $D(t_1,c)+D(t,t_2) \ge D(t,c)+D(t_1,t_2)$ and $D(d,t_2)+D(t_1,t) \ge D(t_1,t_2)+D(d,t)$ hold.

We keep refining the ordered partition $\Sigma$ via the above four tests, until none of the criterion holds for any tuple $c,d,t_1,t_2$.
If the algorithm never reports \textsf{Fail} until the last moment. We return \textsf{Pass}, meaning that $b$ is allowed to be placed consecutive with $a$.

We perform the above test on all pairs $\set{(a,b)\mid b\in T}$.
If no terminal passes this test with $a$, then we report that $D$ is not realizable by any directed Okamura-Seymour instance. Otherwise, we take any $b$ that passes the test and the resulting ordered partition $\Sigma$ it produced. Then for each group $S\in \Sigma$, we recurse on the sub-quasi-metric of $D$ induced by terminals in $\set{a}\cup S$. If some of them reports not realizable, then we report that $D$ is not realizable as well. Otherwise, take the circular orderings returned from the subroutines on these groups, and combine them to obtain a circular ordering on $T$. This completes the description of the algorithm.

\paragraph{Runtime.}
Every time we refine a group, it takes $O(k^4)$ time to perform tests for all four-terminal tuples. Since we can refine at most $k$ times (before $\Sigma$ contains only singletons), the total time before we recurse on subinstances is $O(k^5)$. On the other hand, the total number of subinstances over the course of the algorithm is $O(k)$, the total running time is $O(k^6)$.

\subsection*{Analysis of the algorithm}

If the input quasi-metric $D$ is realizable, then some terminal will pass the tests with $a$, and can be found efficiently using our tests. Also, if $D$ is realizable, then the sub-quasi-metric induced on any subset of terminals is also realizable.
Therefore, it suffices to show that, given a valid ordering for each group in $\Sigma$, the ordering on $T$ obtained by combining them is a valid ordering, and
it suffices to verify Monge property for any tuple $(c,d,e,f)$ of $4$ terminals. Assume that they appear in this order in the combined ordering.

\paragraph{Case 1: $c,d,e,f$ are all from different groups.}
As we have constantly performed test for such tuple of $4$ terminals throughout the algorithm, Monge property on them is guaranteed.

\paragraph{Case 2: $c,d,e$ belong to a group $S$, while $f\notin S$.}
From our algorithm, we have created an instance on the group $S\cup \set{a}$, and get an valid ordering from it. From Monge propery on $a,c,d,e$, $D(a,d)+D(c,e) \ge D(a,e)+D(c,d)$. On the other hand, since $d$ and $e$ were not distinguishable by $a,f$, $D(a,d)-D(a,e)=D(f,d)-D(f,e)$, thus $D(f,d)+D(c,e) \ge D(f,e)+D(c,d)$. The other three inequalities can be derived in similar ways.

\paragraph{Case 3: $c,d\in S$, $e\in S'$, $f\in S''$.}
Such a tuple will go through our four tests, and the fact that they pass the test (and $c,d$ remains in the same group) implies that Monge property hold for them.

\paragraph{Case 4: $c,d\in S$, and $e,f\in S'$.}
Assume that $c,d \prec e,f$. As $c,d$ remain in the same group, 
$$D(a,c)+D(e,d)=D(a,d)+D(e,c),\quad\text{and}\quad D(a,c)+D(f,d)=D(a,d)+D(f,c),$$ 
so $D(e,d)-D(e,c)=D(a,d)-D(a,c)=D(f,d)-D(f,c)$, and $D(e,d)+D(f,c)=D(e,c)+D(f,d)$. Similarly, we can also derive that $D(d,e)+D(c,f)=D(c,e)+D(d,f)$.
On the other hand, since
$$D(a,d)+D(c,e) \ge D(c,d)+D(a,e),\quad\text{and}\quad D(f,d)+D(a,e) \ge D(a,d)+D(f,e),$$ 
so $D(c,e)-D(c,d) \ge D(a,e)-D(a,d) \ge D(f,e)-D(f,d)$, and $D(c,e)+D(f,d) \ge D(c,d)+D(f,e)$. Similarly, we can also derive that $D(e,c)+D(d,f) \ge D(d,c)+D(e,f)$. Thus all four inequalities of Monge property hold.

\paragraph{Case 5: $c,d,e,f$ all lie in the same group $S$.} Since the instance on $\set{a}\cup S$ returns a valid ordering that place $c,d,e,f$ in this order.
The Monge property on them is guaranteed.

\newpage
\bibliographystyle{alpha}
\bibliography{REF}

\end{document}